\newcounter{blubber}
\newenvironment{sparitemize}
{\begin{list}{\labelitemi}{
    \setlength{\leftmargin}{0pt}
    \setlength{\parsep}{0pt}
    \setlength{\itemindent}{4ex}
    \setlength{\itemsep}{0pt}
  }
}{\end{list}}
\newcommand{\ST}{\mathsf{ST}}
\newcommand{\ALC}{\mathcal{ALC}}
\newcommand{\Rat}{\mathbb{Q}}
\newcommand{\At}{\mathsf{At}}
\newcommand{\Nat}{\mathbb{N}}
\newcommand{\Set}{\mathsf{Set}}
\newcommand{\nonexp}[2]{#1 \to_1 #2}
\newcommand{\supnorm}[1]{\lVert #1 \rVert_\infty}
\newcommand{\nbhood}[2]{{U^{#1}(#2)}}
\newcommand{\CA}{\mathcal{A}}
\newcommand{\CB}{\mathcal{B}}
\newcommand{\CC}{\mathcal{C}}
\newcommand{\qr}{\mathsf{qr}}
\newcommand{\rk}{\mathsf{rk}}
\newcommand{\ffun}{\mathsf{F}}
\newcommand{\CF}{\mathcal{F}}
\newcommand{\mfun}{\mathsf{G}}
\newcommand{\ball}[3]{B_{#2}({#3})}
\newcommand{\modf}[1]{\mathcal{L}_{#1}}
\newcommand{\bang}{!}
\newcommand{\rnbhood}[3]{{{#1}^{#2}_{#3}}}
\newcommand{\wbar}[1]{\bar #1}
\newcommand{\ineqp}[2]{#2 \le #1 \le #2 + \tfrac{\epsilon}{2}}
\newcommand{\ineqm}[2]{#2 - \tfrac{\epsilon}{2} \le #1 \le #2}
\newcommand{\ineqb}[2]{#2 - \tfrac{\epsilon}{2} \le #1 \le #2 + \tfrac{\epsilon}{2}}
\theoremstyle{plain}
\newtheorem{thm}{Theorem}[section]
\newtheorem{lem}[thm]{Lemma}
\newtheorem{cor}[thm]{Corollary}
\theoremstyle{definition}
\newtheorem{defn}[thm]{Definition}
\newtheorem{expl}[thm]{Example}
\newtheorem{rem}[thm]{Remark}
\newtheorem{assn}[thm]{Assumption}
\begin{document}

\title[A van Benthem Theorem for Fuzzy Modal Logic]{A van Benthem Theorem for Fuzzy Modal Logic}


\author{Paul Wild}
\author{Lutz Schr\"oder}
\affiliation{
  \institution{Friedrich-Alexander-Universit\"at
  Erlangen-N\"urnberg}            
}

\author{Dirk Pattinson}
\affiliation{
  \institution{Australian National University, Canberra}           
}
\author{Barbara K\"onig}
\affiliation{
  \institution{Universit\"at Duisburg-Essen}           
}

\begin{abstract} 
  We present a fuzzy (or quantitative) version of the van
  Benthem theorem, which characterizes propositional modal logic
  as the bisimulation-invariant fragment of first-order
  logic. Specifically, we consider a first-order fuzzy predicate logic
  along with its modal fragment, and show that the fuzzy first-order
  formulas that are non-expansive w.r.t.\ the natural notion of
  bisimulation distance are exactly those that can be approximated by
  fuzzy modal formulas.
\end{abstract}

 \begin{CCSXML}
<ccs2012>
<concept>
<concept_id>10003752.10003790.10003793</concept_id>
<concept_desc>Theory of computation~Modal and temporal logics</concept_desc>
<concept_significance>500</concept_significance>
</concept>
<concept>
<concept_id>10003752.10003790.10003797</concept_id>
<concept_desc>Theory of computation~Description logics</concept_desc>
<concept_significance>300</concept_significance>
</concept>
</ccs2012>
\end{CCSXML}

\ccsdesc[500]{Theory of computation~Modal and temporal logics}
\ccsdesc[300]{Theory of computation~Description logics}


\keywords{Fuzzy modal logic, behavioural metrics, correspondence theory, modal
  characterization
  theorems, description logics} 

\maketitle

\section{Introduction}
  \label{sec:intro}
  Fuzzy logic is a form of multi-valued logic originally studied by
  \L{}ukasiewicz and Tarski~\cite{LukasiewiczTarski30} and later
  popularized as a logic of \emph{vagueness} by
  Zadeh~\cite{Zadeh65}. It is based on replacing the standard set of
  Boolean truth values with a different lattice, most often, like in
  the present paper, the unit interval. Saying that a formula~$\phi$
  has truth value~$r\in[0,1]$ then means that~$\phi$ holds with
  \emph{degree}~$r$, which would apply to typical vague qualifications
  such as a given person being tall (in contrast to assigning a
  \emph{probability} $p\in[0,1]$ to~$\phi$, which would be read as
  saying that~$\phi$ is either completely true with probability~$p$ or
  completely false with probability $1-p$, as in `the die under the
  cup shows a~$3$ with probability~$p$').

  Beyond the original propositional setup, fuzzy truth values appear
  in variants of more expressive logics, notably in \emph{fuzzy
    first-order logics}~\cite{ChangKeisler66,Novak87,Hajek98} and in
  various \emph{fuzzy modal logics}. The latter go back to many-valued
  modal logics based on making valuations in Kripke
  models~\cite{RosserTurquette52,Segerberg67,Schotch76,Morgan79,Ostermann88,Morikawa88}
  or additionally also the accessibility
  relation~\cite{Fitting91} many-valued, and are nowadays
  maybe most popular in their incarnation as fuzzy description logics
  (e.g.~\cite{Yen91,TrespMolitor98,Straccia98,SanchezTettamanzi05,Hajek05};
  see~\cite{LukasiewiczStraccia08} for an overview). Many-valued modal
  fixpoint logics are also used in software model checking
  (e.g.~\cite{BrunsGodefroid04,KupfermanLustig07}).

  Like in the classical case, fuzzy modal logics typically embed into
  their first-order counterparts. In the classical setting, the core
  result on this embedding is \emph{van Benthem's theorem}, which
  states that a first-order formula~$\phi$ is equivalent to a modal
  formula if and only if~$\phi$ is invariant under
  bisimulation~\cite{BenthemThesis}. This is a form of expressive
  completeness: Modal logic expresses only bisimulation-invariant
  properties, but for such properties it is as expressive as
  first-order logic. Briefly, the aim of the current paper is to
  provide a counterpart of this theorem for a fuzzy modal logic.
  
  There is a wide variety of possible semantics for the fuzzy
  propositional connectives (see~\cite{LukasiewiczStraccia08} for an
  overview), employing, e.g., the additive structure
  (\emph{\L{}ukasiewicz logic}), the multiplicative structure
  (\emph{product logic}) or the Heyting algebra structure
  (\emph{G\"odel logic}) of the unit interval. For technical reasons,
  we work with the simplest possible semantics where conjunction is
  interpreted as minimum and all other connectives are derived using
  the classical encodings, effectively a fragment of \L{}ukasiewicz
  logic often called \emph{Zadeh logic}. That is, we consider
  \emph{Zadeh fuzzy modal logic}, more precisely \emph{Zadeh fuzzy
    $K$} or in description logic terminology \emph{Zadeh fuzzy
    $\ALC$}~\cite{Straccia98}, with \emph{Zadeh fuzzy first-order
    logic} as the first-order correspondence language, essentially the
  Zadeh fragment of Novak's \L{}ukasiewicz fuzzy first order
  logic~\cite{Novak87}.

  It has long been recognized that for quantitative systems, notions
  of \emph{behavioural distance} are more natural than two-valued
  bisimilarity~\cite{BreugelWorrell05}. In such a metric setting,
  bisimulation invariance becomes non-expansivity w.r.t.\ behavioural
  distance (e.g.\ if one views classical bisimilarity as a
  $\{0,1\}$-valued pseudometric, then non-expansivity means that
  distance~$0$ is preserved, which is precisely bisimulation
  invariance). The first step in our program is therefore to establish
  a notion of behavioural distance for fuzzy relational systems. We
  consider three different ways to define such a behavioural metric:
  via the modal logic, via a bisimulation game (similarly as in work
  on probabilistic systems~\cite{DesharnaisEA08}), or via a fixpoint
  characterization based on the Kantorovich lifting (similarly as
  in~\cite{bbkk:behavioral-metrics-functor-lifting}).  We show that
  they all coincide; in particular we obtain a Hennessy-Milner type
  theorem (behavioural distance equals logical distance).  This gives
  us a stable notion of behavioural metric for fuzzy relational
  systems.

  Our main result then says that \emph{the fuzzy modal formulas lie
    dense in the bisimulation-invariant first-order formulas}, where
  by bisimulation-invariant we now mean non-expansive w.r.t.\
  behavioural distance. In other words, every bisimulation-invariant
  fuzzy first-order formula can be modally approximated. The proof
  follows a strategy introduced for the classical case by
  Otto~\cite{o:van-Benthem-Rosen-elementary}, going via locality
  w.r.t.\ an adapted notion of Gaifman distance to show that every
  bisimulation-invariant fuzzy first-order formula is already
  non-expansive w.r.t.\ \emph{depth-$k$} behavioural distance for
  some~$k$ (this distance arises, e.g., by limiting the bisimulation
  game to~$k$~rounds). The key part of our technical development is,
  then, to establish a fuzzy counterpart of what in the classical case
  is a triviality: The classical proof ends in remarking that
  \emph{every} state property (without any assumption of first-order
  definability) of relational transition systems that is invariant
  under depth-$k$ bisimilarity is expressible by a modal formula of
  modal rank~$k$. In the fuzzy setting, this becomes a non-trivial
  result of independent interest: \emph{The fuzzy modal formulas of
    modal rank~$k$ lie dense in the fuzzy state properties that are
    non-expansive w.r.t.\ depth-$k$ behavioural distance.}

  Proofs are mostly omitted or only sketched; full proofs are in the
  appendix.
\paragraph*{Related Work}

Van Benthem's theorem was later shown by Rosen~\cite{Rosen97} to hold
also over finite structures. Modal characterization theorems have
since been proved in various settings, e.g.\ logics with frame
conditions~\cite{DawarOtto05}, coalgebraic modal
logics~\cite{SchroderEA17}, fragments of
XPath~\cite{tenCateEA10,FigueiraEA15,AbriolaEA17}, neighbourhood
logic~\cite{HansenEA09}, modal logic with team
semantics~\cite{KontinenEA15}, modal $\mu$-calculi (within monadic
second order logics)~\cite{JaninWalukiewicz95,EnqvistEA15}, PDL
(within weak chain logic)~\cite{Carreiro15}, modal first-order
logics~\cite{Benthem01,SturmWolter01}, and two-dimensional modal
logics with an $S5$-modality~\cite{WildSchroder17}. All these results
concern two-valued logics; we are not aware of any previous work of
this type for fuzzy modal logics.

There is, however, work on behavioural distances and fuzzy bisimulation
in connection with fuzzy  modal logic. We discuss only fuzzy notions
of bisimulation, omitting work on classical behavioural equivalence for
fuzzy transition systems and fuzzy automata. Balle et
al.~\cite{bgp:metrics-weighted-automata} consider bisimulation
metrics for weighted automata in order to characterize approximate
minimization. Cao et al. \cite{cswc:beh-dist-fuzzy-ts} study a notion
of behavioural distance for fuzzy transition systems, where the
lifting of the metric is derived from a transportation problem (the
dual of the Kantorovich metric), but without considering modal
logics. Fan~\cite{Fan15} proves a Hennessy-Milner type theorem for a
fuzzy modal logic with G\"odel semantics and a notion of fuzzy
bisimilarity. In \cite{fl:regular-equ-social-networks} she considers
an application to social network analysis and also observes that
\L{}ukasiewicz logic is problematic in this context (since the
operators do not preserve non-expansivity). Eleftheriou et
al. \cite{ekn:bisim-heyting} show a Hennessy-Milner theorem for
Heyting-valued modal logics as introduced by
Fitting~\cite{Fitting91}.

While we work in a fuzzy setting, we were inspired by related work on
probabilistic systems: Desharnais et al. studied behavioural distances
on logics \cite{dgjp:metrics-labelled-markov} as well as a game
characterization of probabilistic bisimulation
\cite{DesharnaisEA08}. A Hennessy-Milner theorem for the probabilistic
case is presented in \cite{BreugelWorrell05}, based on a coalgebraic
semantics.

\section{Fuzzy Modal Logic}
  \label{sec:logics}
  We proceed to recall the syntax and semantics of \emph{Zadeh
    fuzzy~$K$} or equivalently \emph{Zadeh
    fuzzy~$\ALC$}~\cite{Straccia98}, along with its first-order
  correspondence language. For simplicity we restrict the exposition
  to the unimodal case; the development extends straightforwardly to
  the multimodal case by just adding more indices. Formulas
  $\phi,\psi$ of \emph{fuzzy modal logic} are given by grammar
  \begin{equation*}
    \phi,\psi :: = c\mid p\mid \phi\ominus c\mid\neg\phi\mid\phi\land\psi\mid\Diamond\phi
  \end{equation*}
  where $p$ ranges over a fixed set $\At$ of \emph{propositional
    atoms} and~$c\in\Rat\cap[0,1]$ over rational truth constants. The
  syntax is thus mostly the same as for standard modal logic; the only
  additional ingredients are the truth constants and \emph{modified
    subtraction}~$\ominus$ as used in real-valued modal logics for
  probabilistic systems~\cite{BreugelWorrell05}. Further logical
  connectives are defined by the classical encodings, e.g.\
  $\phi\lor\psi$ abbreviates $\neg(\neg\phi\land\neg\psi)$, and
  $\phi\to\psi$ abbreviates $\neg\phi\lor\psi$; also, we introduce a
  dual modality $\Box$ as $\Box\phi:=\neg\Diamond\neg\phi$. The
  \emph{rank} $\rk(\phi)$ of a formula~$\phi$ is the maximal nesting
  depth of the modality~$\Diamond$ \emph{and propositional atoms}
  in~$\phi$. Formally, $\rk(\phi)$ is thus defined recursively by
  $\rk(c)=0$, $\rk(p)=1$, $\rk(\Diamond\phi)=1+\rk(\phi)$, and obvious
  clauses for the remaining constructs. We write $\modf{k}$ for the
  set of modal formulas of rank at most~$k$.

The \emph{semantics} of the logic is defined over \emph{fuzzy
  relational models} (or just \emph{models})
\begin{equation*}
  \CA = (A,(p^\CA)_{p\in\At},R^\CA)
\end{equation*}
consisting of a set $A$ of \emph{states}, a map $p^\CA: A \to [0,1]$
for each $p \in \At$, and a map $R^\CA: A \times A \to [0,1]$; we will
drop superscripts~$\CA$ when clear from the context. That is,
propositional atoms are interpreted as fuzzy predicates on the state
set, and states are connected by a binary fuzzy transition relation,
where \emph{fuzzy} is short for $[0,1]$-valued (as usual, we use
\emph{crisp} as an informal opposite of fuzzy, i.e.\ crisp means
two-valued). Fuzzy relational models are a natural fuzzification of
Kripke models, and in fact the instantiation of latticed Kripke models
over DeMorgan lattices~\cite{BrunsGodefroid04,KupfermanLustig07} to
the lattice~$[0,1]$; they arise from \emph{fuzzy transition systems}
(e.g.~\cite{cswc:beh-dist-fuzzy-ts}; \emph{fuzzy automata} go back as
far as~\cite{WeeFu69}) by adding propositional atoms. Unless stated
otherwise, we adhere to the convention that models are denoted by
calligraphic letters and their state sets by the corresponding italic.

We use $\land$, $\lor$ to denote meets and joins in $[0,1]$. A modal
formula $\phi$ is then assigned a fuzzy truth value $\phi_\CA(a)$, or
just $\phi(a)$, at every state $a\in A$, defined inductively by
\begin{align*}
    c(a) & = c \qquad p(a)  = p^\CA(a) \\
  (\phi\ominus c)(a) & = \max(\phi(a)- c,0)\\
  (\neg\phi)(a) & = 1-\phi(a)\\
  (\phi\land\psi)(a) & = \phi(a)\land\psi(a)\\
  (\Diamond\phi)(a)&=\textstyle\bigvee_{a'\in A}(R^\CA(a,a')\land\phi(a')).
\end{align*}
\noindent For brevity, we often conflate formulas and their evaluation
functions in both notation and vernacular, e.g.\ in statements
claiming that certain modal formulas form a dense subset of some set
of state properties.
\begin{rem}\label{rem:zadeh}
  As indicated above, we thus equip the propositional connectives with
  Zadeh semantics. This corresponds to widespread usage but is not
  without disadvantages in comparison to \L{}ukasiewicz semantics,
  which defines the conjunction of $a,b\in[0,1]$ as $\max(a+b-1,0)$;
  e.g.\ implication is the residual of conjunction in \L{}ukasiewicz
  semantics but not in Zadeh semantics (see~\cite{KunduChen98} for a
  more detailed discussion). We will later point out where this choice
  becomes most relevant; roughly speaking, \L{}ukasiewicz semantics is
  not easily reconciled with behavioural distance.

  The modal syntax as given above is essentially identical to the one
  used by van Breughel and Worrell to characterize behavioural
  distance in probabilistic transition
  systems~\cite{BreugelWorrell05}. Semantically, fuzzy models differ
  from probabilistic ones in that they do not require truth values of
  successor edges to sum up to~$1$, and moreover in the probabilistic
  setting the modality~$\Diamond$ is interpreted by expected truth
  values instead of suprema. The semantics of the propositional
  connectives, on the other hand, is in fact the same in both cases.
\end{rem}
\begin{expl}
  We can see fuzzy~$K$ as a logic of fuzzy transition systems
  (e.g.~\cite{cswc:beh-dist-fuzzy-ts}). E.g.\ the formula
  $\Diamond\Box 0$ then describes, roughly speaking, the degree to
  which a deadlocked state can be reached in one step. Formally,
  $(\Box 0)(y)$ is the degree to which a state~$y$ in a model~$\CA$ is
  deadlocked, i.e.\ the infimum over $1-r$ where $r$ ranges over the
  degrees $R^\CA(y,z)$ to which any state~$z$ is a successor
  of~$y$. Then, $(\Diamond\Box 0)(x)$ is the supremum of
  $\min(R^\CA(x,y),(\Box 0)(y))$ over all~$y$.

  In the reading of fuzzy~$K$ as the description logic
  fuzzy~$\ALC$~\cite{Straccia98} (with only one role for simplicity),
  the underlying fuzzy relation would be seen as a vague connection
  between individuals, such as a `likes' relation between persons. In
  this reading, the formula
  \begin{equation*}
    \Box(\textsf{soft-spoken}\land\Diamond\textsf{reasonable})
  \end{equation*}
  describes people who only like people who are soft-spoken and like
  some reasonable person, with all these terms understood in a vague
  sense.
\end{expl}
\noindent As indicated previously, the first-order correspondence
language for fuzzy modal logic in this sense is \emph{Zadeh fuzzy
  first-order logic} over a single binary predicate~$R$ and a unary
predicate~$p$ for every propositional atom~$p$. Formulas $\phi,\psi$
of what we briefly term \emph{fuzzy first-order logic} or \emph{fuzzy
  FOL} are thus given by the grammar
\begin{equation*}
  \phi,\psi :: = c\mid p(x)\mid R(x,y)\mid x=y\mid\phi\ominus c\mid\neg\phi\mid\phi\land\psi\mid\exists x.\phi
\end{equation*}
where $c\in[0,1]\cap\Rat$, $p\in\At$, and $x,y$ range over a fixed
countably infinite set of variables. We have the usual notions of free
and bound variables. The \emph{quantifier rank} $\qr(\phi)$ of a
formula~$\phi$ is defined, as usual, as the maximal nesting depth of
quantifiers in~$\phi$ (unlike for the modal rank, we do not let atomic
formulas count towards the quantifier rank). The semantics is
determined as the evident extension of the modal semantics, with the
existential quantifier interpreted as supremum and `$=$' as crisp
equality. Formally, a formula $\phi(x_1,\dots,x_n)$ with free
variables among $x_1,\dots,x_n$ is interpreted, given a fuzzy
relational model~$\CA$ and a vector $\bar a=(a_1,\dots, a_n)$ of
values for the free variables, as a truth value
$\phi(\bar a)\in[0,1]$, given by
\begin{gather*}
  p(x_i)(\bar a)  = p^\CA(a_i) \qquad R(x_i,x_j)(\bar a)=R^\CA(a_i,a_j)\\
  (x_i = x_j)(\bar a)  = 1  \text{ if $a_i=a_j$, and $0$ otherwise}\\
  (\exists x_0.\phi(x_0,\dots,x_n))(\bar a) = \textstyle
  \bigvee_{a_0\in A}\phi(a_0,\bar a)
\end{gather*}
and essentially the same clauses as in the modal case for the other
connectives.

We thus have a variant of the classical \emph{standard translation},
that is, a truth-value preserving embedding $\ST_x$ of fuzzy~$K$ into
fuzzy FOL, indexed over a variable~$x$ naming the current state and
defined inductively by $\ST_x(p)=p(x)$,
\begin{equation*}
  \ST_x(\Diamond\phi)=\exists y.\,(R(x,y)\land\ST_y(\phi)),
\end{equation*}
and commutation with all other constructs. Fuzzy~$K$ thus becomes a
fragment of fuzzy FOL, and the object of the present paper is to
characterize their relationship.

\paragraph{Coalgebraic view} Recall that an \emph{$F$-coalgebra}
$(A,\alpha)$ for a set functor $F:\Set\to\Set$ consists of a set~$A$
of \emph{states} and a map $\alpha:A\to FA$. The set $FA$ is thought
of as containing structured collections over~$A$, so that $\alpha$
assigns to each state~$a$ a structured collection~$\alpha(a)$ of
successors.  Coalgebras thus provide a general framework for
state-based systems~\cite{Rutten00}.  We will partly use coalgebraic
techniques in our proofs, in particular final chain arguments. We
therefore note that fuzzy relational models are coalgebras for the set
functor $\mfun$ given by
\begin{equation*}
  \mfun= [0,1]^\At\times
  \ffun 
\end{equation*}
where $\ffun X=[0,1]^X$ is the fuzzy version of the covariant powerset
functor. That is, $\ffun$ acts on maps $f: X \to Y$ by taking
fuzzy direct images,
\begin{equation*}\textstyle
  \ffun f(g)(y)=\bigvee_{f(x)=y}g(x).
\end{equation*}
Explicitly, a fuzzy relational model $\CA$ corresponds to the
$\mfun$-coalgebra $(A,\alpha)$ given by $\alpha(a)=(h,g)$ where
$h(p)=p^\CA(a)$ for $p\in\At$ and $g(a')=R^\CA(a,a')$ for $a'\in A$.

A \emph{$\mfun$-coalgebra morphism} $f:(A,\alpha)\to(B,\beta)$ between
$\mfun$-coalgebras $(A,\alpha),(B,\beta)$ (i.e.\ fuzzy relational
models $\CA,\CB$) is a map $f:A\to B$ such that
$\mfun f\alpha=\beta f$. Explicitly, this means that~$f$ is a
\emph{bounded morphism}, i.e.\ $p^\CB(f(a))=p^\CA(a)$ for all atoms
$p$ and all $a\in A$, and $R^\CB(f(a),b)=\bigvee_{f(a')=b}R^\CA(a,a')$
for every $b\in B$. For models $\CA$ and $\CB$, we define their
\emph{disjoint union} $\CA+\CB$ as the model with domain $A+B$
(disjoint union of sets), $p^{\CA+\CB}(c) = p^\CA(c)$ for $c\in A$ and
$p^{\CA+\CB}(c)=p^\CB(c)$ otherwise, and
$R^{\CA+\CB}(c,c') = R^\CA(c,c')$ if $c,c'\in A$,
$R^{\CA+\CB}(c,c') = R^\CB(c,c')$ if $c,c'\in B$,
$R^{\CA+\CB}(c,c') = 0$ otherwise. This is precisely the categorical
coproduct of $\CA$ and $\CB$ as $\mfun$-coalgebras; in particular, the
injection maps $\CA\to\CA+\CB$ and $\CB\to\CA+\CB$ are bounded
morphisms.

\section{Pseudometric Spaces}
\label{sec:pseudometric-spaces}

\noindent We recall some basics on pseudometric spaces, which differ
from metric spaces in that distinct points can have distance~$0$:
\begin{definition}[Pseudometric space, non-expansive maps]
  Given a non-empty set $X$, a \emph{(bounded) pseudometric on $X$} is
  a function $d\colon X\times X\to [0,1]$ such that for all
  $x,y,z\in X$, the following axioms hold: $d(x,x) = 0$
  (\emph{reflexivity}), $d(x,y) = d(y,x)$ (\emph{symmetry}),
  $d(x,z) \le d(x,y)+d(y,z)$ (\emph{triangle inequality}). If
  additionally $d(x,y)=0$ implies $x=y$, then $d$ is  a
  \emph{metric}. A \emph{(pseudo)metric space} is a pair $(X,d)$ where
  $X$ is a set and $d$ is a (pseudo)metric on $X$. The \emph{diameter}
  of $A\subseteq X$ is $\bigvee_{x,y\in A}d(x,y)$. We equip the unit
  interval $[0,1]$ with the standard Euclidean distance~$d_e$,
  \begin{equation*}
    d_e(x,y)=|x-y|.
  \end{equation*}
  A function $f\colon X\to Y$ between pseudometric spaces $(X,d_1)$,
  $(Y,d_2)$ is \emph{non-expansive} if
  $d_2 \circ (f \times f) \le d_1$, i.e.\ $d_2(f(x),f(y))\le d_1(x,y)$
  for all $x,y$. We then write
  \begin{equation*}
    f \colon \nonexp{(X, d_1)}{(Y, d_2)}.
  \end{equation*}
  The space of non-expansive functions $\nonexp{(X, d_1)}{(Y, d_2)}$
  is equipped with the \emph{supremum (pseudo)metric} $d_\infty$
  defined by
  \begin{equation*}
    d_\infty(f,g) = \sup_{x\in X} d_2(f(x),g(x))
  \end{equation*}
  In the special case $(Y,d_2) = ([0,1],d_e)$, we will also denote
  $d_\infty(f,g)$ as $\supnorm{f-g}$.
  
  As usual we denote by
  $\ball{d}{\epsilon}{a} = \{x\in X\mid d(a,x) \le \epsilon\}$ the
  \emph{ball} of radius $\epsilon$ around $a$ in $(X,d)$. The space
  $(X,d)$ is \emph{totally bounded} if for every $\epsilon > 0$ there
  exists a finite \emph{$\epsilon$-cover}, i.e.\ finitely many
  elements $a_1,\dots,a_n\in X$ such that
  $X = \bigcup_{i=1}^n \ball{d}{\epsilon}{a_i}$.
\end{definition}
\noindent Recall that a metric space is compact iff it is complete and
totally bounded.

Given a fuzzy relational model $\CA$, we extend the semantics of
$\Diamond$ to arbitrary functions $f\colon A\to[0,1]$ by
\begin{equation*}
  \Diamond f\colon A\to[0,1], 
  \quad
  (\Diamond f)(a)= \bigvee_{a'\in A} R^\CA(a,a') \land f(a').
\end{equation*}

\begin{lem}
  \label{lem:diamond-nonexp}
  The map $f\mapsto\Diamond f$ is non-expansive.
\end{lem}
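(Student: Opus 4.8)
The plan is to unfold the definition of the supremum pseudometric on both sides and reduce the claim to two elementary non-expansivity facts about the operations $\land=\min$ and $\bigvee=\sup$ on $[0,1]$. Equipping both the domain and the codomain of $f\mapsto\Diamond f$ with $\supnorm{\cdot}$, what we must show is
\begin{equation*}
  \supnorm{\Diamond f-\Diamond g}\le\supnorm{f-g}
\end{equation*}
for all $f,g\colon A\to[0,1]$, i.e.\ that $|(\Diamond f)(a)-(\Diamond g)(a)|\le\supnorm{f-g}$ holds for every fixed state $a\in A$.

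First I would record the two pointwise ingredients. The binary minimum is non-expansive in each argument: for $r,s,s'\in[0,1]$ one has $|\min(r,s)-\min(r,s')|\le|s-s'|$, which follows by the usual case distinction on the relative order of the three arguments. The supremum is likewise non-expansive: if $(x_{a'})_{a'\in A}$ and $(y_{a'})_{a'\in A}$ are families in $[0,1]$ with $|x_{a'}-y_{a'}|\le\epsilon$ for all $a'$, then $x_{a'}\le y_{a'}+\epsilon\le(\bigvee_{a''}y_{a''})+\epsilon$ for each $a'$, whence $\bigvee_{a'}x_{a'}\le(\bigvee_{a'}y_{a'})+\epsilon$; by symmetry $|\bigvee_{a'}x_{a'}-\bigvee_{a'}y_{a'}|\le\epsilon$.

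Combining these, fix $a\in A$ and set $\epsilon=\supnorm{f-g}$. For each candidate successor $a'$ we have $|f(a')-g(a')|\le\epsilon$, so by non-expansivity of $\min$ (with $r=R^{\CA}(a,a')$ held fixed),
\begin{equation*}
  |(R^{\CA}(a,a')\land f(a'))-(R^{\CA}(a,a')\land g(a'))|\le|f(a')-g(a')|\le\epsilon.
\end{equation*}
Applying the non-expansivity of $\bigvee$ to the two families $(R^{\CA}(a,a')\land f(a'))_{a'}$ and $(R^{\CA}(a,a')\land g(a'))_{a'}$ then yields $|(\Diamond f)(a)-(\Diamond g)(a)|\le\epsilon$. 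Since $a$ was arbitrary, taking the supremum over $a$ gives $\supnorm{\Diamond f-\Diamond g}\le\epsilon=\supnorm{f-g}$, as required.

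I do not expect any genuine obstacle here: the whole argument is a short reduction once the non-expansivity of $\min$ and $\sup$ is in place, and both of those are standard. The only point deserving a little care is that the bound must be derived uniformly in $a'$ \emph{before} passing to the suprema, rather than by comparing the two suprema directly; the $\epsilon$-shift argument above is exactly what makes this step correct.
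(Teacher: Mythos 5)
Your proposal is correct and follows essentially the same route as the paper's own proof: fix a state $a$, bound $|R^\CA(a,a')\land f(a') - R^\CA(a,a')\land g(a')|$ by $\supnorm{f-g}$ pointwise in $a'$, and then pass to the supremum over $a'$. You merely make explicit the two elementary non-expansivity facts (for $\min$ and for $\sup$) that the paper uses implicitly, and your closing caution about establishing the bound uniformly in $a'$ before comparing the suprema is exactly the step the paper's proof performs.
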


\section{Behavioural Distance\\and Bisimulation Games}
  \label{sec:bisimulation-games}
  We proceed to define our notion of behavioural distance for fuzzy
  relational models. We opt for a game-based definition as the basic
  notion, and relate it to logical distance, showing that fuzzy modal
  logic is non-expansive w.r.t.\ behavioural distance. In
  Section~\ref{sec:modal-approx} we will give an equivalent
  characterization in terms of fixed points, and show that all
  distances coincide at finite depth. Following ideas used in
  probabilistic bisimulation metrics~\cite{DesharnaisEA08}, we use
  bisimulation games that have crisp outcomes but are parametrized
  over a maximal allowed deviation; we will then define the distance
  as the least parameter for which duplicator wins.
\begin{defn}
  \label{def:bisimulation-game}
  Let $\CA,\CB$ be fuzzy relational models, and let
  $a_0 \in A, b_0 \in B$.  The \emph{$\epsilon$-bisimulation game for
    $\CA,a_0$ and $\CB,b_0$} (or just for $a_0,b_0$) played by $S$
  (\emph{spoiler}) and $D$ (\emph{duplicator}) is given as follows.
  \begin{itemize}
  \item \emph{Configurations:} pairs $(a,b)\in A\times B$ of states.
  \item \emph{Initial configuration:} $(a_0,b_0)$.
  \item \emph{Moves:} Player $S$ needs to pick a new state in one of
    the models $\CA$ or $\CB$, say $a'\in A$, such that
    $R^\CA(a,a') > \epsilon$, and then $D$ needs to pick a state in
    the other model, say $b'\in B$, such that
    $R^\CB(b,b')\ge R^\CA(a,a')-\epsilon$. The new configuration is
    then $(a',b')$.
  \item \emph{Winning condition:} Any player who needs to move but
    cannot, loses. Player $D$ additionally needs to maintain the
    following winning condition \emph{before} every round: For every
    $p\in\At$, $|p^\CA(a)-p^\CB(b)|\le\epsilon$.
  \end{itemize}
  There are two variants of the game, the unrestricted game in which
  $D$ wins infinite plays, and the \emph{depth-$n$
    $\epsilon$-bisimulation game}, which is restricted to~$n$ rounds,
  meaning that $D$ wins after $n$ rounds have been played.
\end{defn}

\begin{rem}
  \label{rem:zero-game}
  Note that, since the invariant only needs to hold before every round
  actually played,~$D$ always wins the depth-$0$ game
  regardless of $a_0$ and $b_0$.
\end{rem}
\noindent 
The usual composition lemma for bisimulations then takes the following
form:
\begin{lem}
  \label{lem:game-transitive}
  Let $\CA,\CB,\mathcal{C}$ be models and
  $a_0\in A, b_0\in B, c_0\in C$ such that $D$ wins the
  $\epsilon$-bisimulation game for $(a_0,b_0)$ and the
  $\delta$-bisimulation game for $(b_0,c_0)$. Then $D$ also wins the
  $(\epsilon+\delta)$-bisimulation game for $(a_0,c_0)$. The same
  holds for the corresponding depth-$n$ bisimulation games.
\end{lem}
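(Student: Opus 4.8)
The plan is to prove the composition lemma by constructing a winning strategy for $D$ in the $(\epsilon+\delta)$-game for $(a_0,c_0)$ from winning strategies in the two given games. The natural approach is to have $D$ play the composite game ``through $\CB$'', maintaining at every configuration $(a,c)$ a witness state $b\in B$ such that $D$'s strategy in the $\epsilon$-game for $(a,b)$ and in the $\delta$-game for $(b,c)$ are both in winning positions. I would set this up by induction on the number of rounds played, taking as invariant that the current triple $(a,b,c)$ is reachable by simultaneous plays consistent with the two winning strategies.

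The core of the argument is the move simulation. Suppose we are in configuration $(a,c)$ with witness $b$, and $S$ makes a legal move in the $(\epsilon+\delta)$-game, say picking $a'\in A$ with $R^\CA(a,a') > \epsilon+\delta$. I would relay this as a spoiler move in the $\epsilon$-game for $(a,b)$: since $R^\CA(a,a') > \epsilon+\delta \ge \epsilon$, this is a legal spoiler move there, so $D$'s winning strategy supplies $b'\in B$ with $R^\CB(b,b')\ge R^\CA(a,a')-\epsilon$. Now I treat $b'$ as a spoiler move in the $\delta$-game for $(b,c)$: this is legal provided $R^\CB(b,b') > \delta$, which follows from $R^\CB(b,b')\ge R^\CA(a,a')-\epsilon > (\epsilon+\delta)-\epsilon = \delta$. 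Hence $D$'s winning strategy in the $\delta$-game yields $c'\in C$ with $R^\CC(c,c')\ge R^\CB(b,b')-\delta$. Chaining the two inequalities gives $R^\CC(c,c')\ge R^\CA(a,a')-(\epsilon+\delta)$, which is exactly the legality requirement for $D$'s response $c'$ in the $(\epsilon+\delta)$-game. The symmetric case, where $S$ moves in $\CC$, is handled by running the two strategies in the opposite direction. In both cases the new witness is $b'$, and the two component games remain in winning positions, preserving the invariant.

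It remains to check the atomic winning condition and the termination bookkeeping. For the propositional invariant, at any reachable triple $(a,b,c)$ we have $|p^\CA(a)-p^\CB(b)|\le\epsilon$ and $|p^\CB(b)-p^\CC(c)|\le\delta$ from the two component strategies, so the triangle inequality for the Euclidean distance on $[0,1]$ yields $|p^\CA(a)-p^\CC(c)|\le\epsilon+\delta$ for every $p\in\At$, which is precisely what $D$ must maintain. For the unrestricted game, an infinite play in the composite game induces infinite plays in both component games (each round relays to one round in each), both of which $D$ wins by assumption, so the composite play is won by $D$; the depth-$n$ version is identical, since each composite round consumes exactly one round of each component game, so $n$ rounds map to $n$ rounds on each side.

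I expect the only mild subtlety to be the bookkeeping that makes ``$D$ responds with $c'$ using $b'$ as intermediary'' into a genuine strategy: one must confirm that at the moment $D$ needs to answer, the relayed spoiler moves are well-defined and legal \emph{before} invoking the component strategies, and in particular that the strict inequality $R^\CA(a,a') > \epsilon+\delta$ propagates to the strict inequalities needed to make the intermediate and final moves legal spoiler moves. Since all these follow immediately from arithmetic on the thresholds $\epsilon$, $\delta$, and $\epsilon+\delta$, I anticipate no real obstacle; the lemma is essentially a transitivity-of-strategies argument, and the main work is simply verifying that the additive thresholds compose correctly.
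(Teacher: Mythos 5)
Your proof is correct and matches the paper's own argument essentially step for step: the paper likewise has $D$ maintain the invariant that the current configuration $(a,c)$ admits a witness $b\in B$ with $(a,b)$ and $(b,c)$ winning positions in the two component games, relays $S$'s move through the $\epsilon$-game to obtain $b'$, verifies $R^\CB(b,b')\ge R^\CA(a,a')-\epsilon>\delta$ to make $b'$ a legal spoiler move in the $\delta$-game, and chains the inequalities to get $R^\CC(c,c')\ge R^\CA(a,a')-(\epsilon+\delta)$, with the propositional condition following from the triangle inequality. Your additional round-counting remarks for the depth-$n$ and unrestricted variants are correct and merely spell out what the paper leaves implicit.
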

\noindent As indicated above, we then obtain a notion of behavioural
distance by taking infima:

\begin{defn}[Behavioural distance]
  \label{def:bisim-distance}
  Let $\CA,a_0$ and $\CB,b_0$ be as in
  Definition~\ref{def:bisimulation-game}.  The \emph{behavioural
    distance} $d^G(a_0,b_0)$ of $a_0$ and $b_0$ is the infimum over
  all $\epsilon$ such that $D$ wins the $\epsilon$-bisimulation game
  for $a_0$ and $b_0$.  The \emph{depth-$n$ behavioural distance}
  $d^G_n(a_0,b_0)$ of $a_0$ and $b_0$ is defined analogously, using
  the depth-$n$ bisimulation game. 
\end{defn}
\noindent This definition is justified by the following lemma, which
follows from Lemma~\ref{lem:game-transitive}:
\begin{lem}\label{lem:pseudometrics}
  The behavioural distance $d^G$ and all depth-$n$ behavioural
  distances $d^G_n$ are pseudometrics.
\end{lem}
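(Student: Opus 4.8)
The plan is to verify the three pseudometric axioms --- reflexivity, symmetry, and the triangle inequality --- directly from the game semantics, treating $d^G$ and every $d^G_n$ in parallel since the arguments are identical. Before starting I would record one \emph{monotonicity} observation to be used throughout: for a fixed pair $(a_0,b_0)$, the set of $\epsilon$ for which $D$ wins the (depth-$n$) $\epsilon$-bisimulation game is upward closed. Indeed, if $\epsilon\le\epsilon'$, then every legal spoiler move in the $\epsilon'$-game (requiring $R^\CA(a,a')>\epsilon'$) is already legal in the $\epsilon$-game, every duplicator response admissible in the $\epsilon$-game (requiring $R^\CB(b,b')\ge R^\CA(a,a')-\epsilon$) is a fortiori admissible in the $\epsilon'$-game, and the winning invariant $|p^\CA(a)-p^\CB(b)|\le\epsilon$ only relaxes; hence a winning strategy for $D$ transfers from smaller to larger $\epsilon$. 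I would also note that $\epsilon=1$ always lies in the winning set (since $R>1$ is impossible, $S$ can never move), so the infima $d^G,d^G_n$ indeed lie in $[0,1]$ as a bounded pseudometric requires.

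For reflexivity I would exhibit a concrete duplicator strategy in the $0$-game for $(a,a)$ (taking $\CB=\CA$): $D$ copies each spoiler move into the other coordinate, keeping the configuration on the diagonal. There the response condition $R(a,a')\ge R(a,a')-0$ holds with equality and the invariant $|p^\CA(a)-p^\CA(a)|=0\le 0$ is trivial, so $D$ wins every finite or infinite play; thus $d^G(a,a)=d^G_n(a,a)=0$. Symmetry is immediate from the fact that the rules are invariant under swapping the two models: the game for $(b,a)$ over $(\CB,\CA)$ is the coordinate-mirror of the game for $(a,b)$ over $(\CA,\CB)$ --- spoiler may move in either model and duplicator answers in the other, and the invariant $|p^\CA(a)-p^\CB(b)|\le\epsilon$ is symmetric --- so the two winning sets coincide and the infima agree.

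The triangle inequality is where the real content sits, and it is exactly what Lemma~\ref{lem:game-transitive} was set up to deliver. Fixing $a,b,c$ and arbitrary $\epsilon>d^G(a,b)$, $\delta>d^G(b,c)$, the definition of the infimum together with the monotonicity above gives that $D$ wins the $\epsilon$-game for $(a,b)$ and the $\delta$-game for $(b,c)$; the composition lemma then yields a win for $D$ in the $(\epsilon+\delta)$-game for $(a,c)$, so $d^G(a,c)\le\epsilon+\delta$. Letting $\epsilon\downarrow d^G(a,b)$ and $\delta\downarrow d^G(b,c)$ gives $d^G(a,c)\le d^G(a,b)+d^G(b,c)$, and the depth-$n$ case is verbatim the same via the depth-$n$ clause of Lemma~\ref{lem:game-transitive}. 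The only point I would be careful about --- and the closest thing to an obstacle --- is that the defining infima need not be attained, so one must argue through strict overestimates $\epsilon,\delta$ and use monotonicity to pass from ``$D$ wins for some $\epsilon''\le\epsilon$'' to ``$D$ wins for $\epsilon$'' before invoking composition; everything else is routine.
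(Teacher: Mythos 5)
Your proof is correct and follows essentially the same route as the paper's: reflexivity via the copying strategy in the $0$-game, symmetry from the symmetry of the game rules, and the triangle inequality from the composition result (Lemma~\ref{lem:game-transitive}). Your explicit monotonicity observation (upward closure of the winning set in $\epsilon$) is a sound and welcome elaboration of a step the paper leaves implicit when passing from infima, which need not be attained, to the composition lemma.
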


\begin{rem}
  We emphasize that $d^G(a,b)=0$ does not in general imply that~$D$
  wins the $0$-bisimulation game on $a,b$. In this sense, the notion
  of $\epsilon$-bisimulation is thus what enables us to avoid
  restricting to models that are \emph{witnessed}~\cite{Hajek05} in
  the sense that all suprema appearing in the evaluation of
  existential quantifiers are actually maxima.
\end{rem}
\noindent We have the expected relationship between the various
behavioural pseudometrics:
\begin{lem}\label{lem:metrics}
  For all models $\CA,\CB$, states $a\in A$, $b\in B$, and
  $n\ge m\ge 0$, we have
  \begin{equation*}
    d^G_m(a,b)\le d^G_n(a,b)\le d^G(a,b).
  \end{equation*}
\end{lem}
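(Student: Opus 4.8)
The plan is to derive both inequalities from a single monotonicity observation about winning strategies, reading off the three distances from Definition~\ref{def:bisim-distance} as infima over sets of admissible deviations~$\epsilon$. Nothing about the pseudometric structure is needed; the argument is purely about which values of~$\epsilon$ admit a winning strategy for~$D$.

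First I would fix~$\epsilon$ and compare the games for that one parameter and the same initial configuration~$(a,b)$. The key claim is that a winning strategy for~$D$ in a ``longer'' game is \emph{ipso facto} a winning strategy in any ``shorter'' game. Concretely, in the depth-$n$ game~$D$ is declared the winner as soon as~$n$ rounds have been played (cf.\ Remark~\ref{rem:zero-game}), so a strategy on which~$D$ does not lose within the first~$n$ rounds---never getting stuck, and maintaining the atom invariant $|p^\CA(a)-p^\CB(b)|\le\epsilon$ before each round---also avoids losing within the first~$m\le n$ rounds, and is therefore winning in the depth-$m$ $\epsilon$-game. Likewise, a winning strategy in the unrestricted $\epsilon$-game wins every play of any length; restricting attention to the first~$n$ rounds, it is in particular winning in the depth-$n$ $\epsilon$-game.

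Writing $W$, $W_n$, $W_m$ for the sets of~$\epsilon$ on which~$D$ wins the unrestricted, depth-$n$, and depth-$m$ game for~$a,b$, respectively, the previous paragraph yields the inclusions $W\subseteq W_n\subseteq W_m$ whenever $n\ge m$. Since each distance is by Definition~\ref{def:bisim-distance} the infimum of the corresponding set, and the infimum over a larger set is no larger, these inclusions reverse into $d^G_m(a,b)\le d^G_n(a,b)\le d^G(a,b)$, which is exactly the claim.

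There is no serious obstacle here; the proof is essentially bookkeeping. The only two points that require care are (i) the direction of the inequality when passing from set inclusion to infima, and (ii) the precise winning convention of the depth-bounded games, namely that~$D$ wins outright once the round bound is reached---so that surviving \emph{fewer} rounds is genuinely easier, rather than harder, for~$D$. Once these are pinned down, both inequalities drop out simultaneously from the nesting $W\subseteq W_n\subseteq W_m$.
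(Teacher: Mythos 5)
Your proposal is correct and matches the paper's proof, which likewise observes that a winning strategy for~$D$ in the unrestricted $\epsilon$-game is also winning in the depth-$n$ game, and a depth-$n$ winning strategy is winning in the depth-$m$ game for $m\le n$, so the infima defining the distances are ordered as claimed. Your write-up merely makes explicit the bookkeeping ($W\subseteq W_n\subseteq W_m$ and the reversal under infima) that the paper leaves implicit.
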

\noindent As usual, behavioural equivalence is invariant under
coalgebra morphisms; this can now be phrased as follows:
\begin{lem}\label{lem:mor-zero-dist}
  Let $\CA,\CB$ be fuzzy relational models, and let $f:\CA\to\CB$ be a
  bounded morphism. Then for every $a\in A$, $d^G(a,f(a))=0$.
\end{lem}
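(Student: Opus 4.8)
The plan is to exhibit an explicit winning strategy for duplicator~$D$ in the $\epsilon$-bisimulation game for $(a, f(a))$, for \emph{every} $\epsilon > 0$; since $d^G(a,f(a))$ is by definition the infimum over all such $\epsilon$, this immediately yields $d^G(a,f(a)) = 0$. The strategy maintains the invariant that every configuration reached has the form $(a', f(a'))$ for some $a' \in A$, i.e.\ the second component is always the $f$-image of the first. This holds initially, since the starting configuration is $(a, f(a))$.

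First I would check the atomic winning condition: for any configuration $(a', f(a'))$ and any $p \in \At$, the bounded-morphism equation $p^\CB(f(a')) = p^\CA(a')$ gives $|p^\CA(a') - p^\CB(f(a'))| = 0 \le \epsilon$, so the condition is satisfied, in fact with room to spare. It then remains to describe how $D$ answers each spoiler move while preserving the invariant. If spoiler picks $a'' \in A$ with $R^\CA(a', a'') > \epsilon$, then $D$ responds with $f(a'')$; the second bounded-morphism property yields $R^\CB(f(a'), f(a'')) = \bigvee_{f(\tilde a) = f(a'')} R^\CA(a', \tilde a) \ge R^\CA(a', a'')$, so in particular $R^\CB(f(a'), f(a'')) \ge R^\CA(a', a'') - \epsilon$, as required, and the new configuration $(a'', f(a''))$ again satisfies the invariant.

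The genuinely delicate case, and the one I expect to be the main obstacle, is when spoiler instead moves in~$\CB$, picking $b'' \in B$ with $R^\CB(f(a'), b'') > \epsilon$. Here $D$ must produce some $a'' \in A$ with $R^\CA(a', a'') \ge R^\CB(f(a'), b'') - \epsilon$ \emph{and} with $f(a'') = b''$, in order to keep the invariant. The bounded-morphism equation $R^\CB(f(a'), b'') = \bigvee_{f(\tilde a) = b''} R^\CA(a', \tilde a)$ expresses the required value as a supremum over the fibre $f^{-1}(b'')$ (which is nonempty, as the supremum exceeds~$\epsilon > 0$), but this supremum need not be attained. This is exactly where the positive slack~$\epsilon$ is indispensable: by the defining property of the supremum there exists some $a'' \in f^{-1}(b'')$ with $R^\CA(a', a'') > R^\CB(f(a'), b'') - \epsilon$, and $D$ plays this~$a''$. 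The resulting configuration is $(a'', b'') = (a'', f(a''))$, so the invariant is preserved.

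Since $D$ can always respond and hence never gets stuck, $D$ wins every infinite play as well as every play in which spoiler gets stuck; that is, $D$ wins the unrestricted $\epsilon$-game for each $\epsilon > 0$, giving $d^G(a, f(a)) = 0$. I note that the possible non-attainment of the supremum in the second case is precisely why one cannot in general strengthen this to $D$ winning the $0$-game, consistent with the remark preceding the lemma.
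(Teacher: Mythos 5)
Your proof is correct and coincides with the paper's own argument: the same invariant that configurations have the form $(a',f(a'))$, the same use of $p^\CB(f(a'))=p^\CA(a')$ for the atomic winning condition, and the same $\epsilon$-slack argument to choose a near-optimal preimage in the fibre $f^{-1}(b'')$ when spoiler moves in~$\CB$. Your closing observation about non-attainment of the supremum being the reason one cannot win the $0$-game also matches the remark the paper makes after Definition~\ref{def:bisim-distance}.
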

\begin{proof}[Proof (sketch)]
  Player $D$ wins the depth-$n$ $\epsilon$-bisimulation game for every
  $\epsilon>0$.
\end{proof}
\noindent Since coproduct injections are bounded morphisms, a special
case is
\begin{lem}
  \label{lem:bisim-inv-disjoint-union}
  Given models $\CA,\CB$ and $a\in A$, the state $a$ in~$\CA$ and the
  corresponding state $a$ in $\CA+\CB$ have behavioural distance~$0$.
\end{lem}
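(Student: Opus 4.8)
The plan is to derive Lemma~\ref{lem:bisim-inv-disjoint-union} directly as an instance of Lemma~\ref{lem:mor-zero-dist}. Recall that the disjoint union $\CA+\CB$ was introduced precisely so that the coproduct injection $\inj\colon\CA\to\CA+\CB$ is a bounded morphism (i.e.\ a $\mfun$-coalgebra morphism), a fact already asserted in the excerpt. Applying Lemma~\ref{lem:mor-zero-dist} to this morphism yields, for every $a\in A$, that $d^G(a,\inj(a))=0$, where $\inj(a)$ is exactly the copy of $a$ living in $\CA+\CB$. That is the entire content of the claim.

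First I would make explicit that $\inj\colon A\to A+B$ satisfies the two defining conditions of a bounded morphism. On propositional atoms this is immediate from the definition of $p^{\CA+\CB}$, which restricts to $p^{\CA}$ on the $A$-summand, so $p^{\CA+\CB}(\inj(a))=p^{\CA}(a)$. For the transition structure I would check that $R^{\CA+\CB}(\inj(a),b)=\bigvee_{\inj(a')=b}R^{\CA}(a,a')$ for all $b\in A+B$: if $b$ lies in the $A$-summand, say $b=\inj(a'')$, the join on the right collapses to the single term $R^{\CA}(a,a'')$, which equals $R^{\CA+\CB}(\inj(a),\inj(a''))$ by definition; if $b$ lies in the $B$-summand, there is no $a'$ with $\inj(a')=b$, so the right-hand join is $0$, matching $R^{\CA+\CB}(\inj(a),b)=0$ from the cross-summand clause. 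This confirms $\inj$ is a bounded morphism.

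Having established this, I would simply invoke Lemma~\ref{lem:mor-zero-dist} with $f=\inj$ to conclude $d^G(a,\inj(a))=0$, which is the statement. I do not anticipate a genuine obstacle here: the only thing to be careful about is bookkeeping in the cross-summand case of the transition condition, where one must note that the supremum over an empty index set is $0$ (consistent with the convention $\bigvee\emptyset=0$ used throughout the $[0,1]$-valued setting). Everything else is an unwinding of definitions, so the proof is genuinely a one-line corollary of the preceding lemma together with the observation, already made in the text, that coproduct injections are bounded morphisms.
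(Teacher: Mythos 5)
Your proposal is correct and coincides with the paper's own argument: the lemma is stated there as an immediate special case of Lemma~\ref{lem:mor-zero-dist}, justified by the remark that coproduct injections are bounded morphisms. Your explicit verification of the bounded-morphism conditions (including the empty-join case for cross-summand transitions) merely spells out what the paper asserts when it identifies $\CA+\CB$ as the coproduct of $\mfun$-coalgebras.
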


\noindent Behavioural distance determines our notion of bisimulation
invariance, which we take to mean non-expansivity w.r.t.\ behavioural
distance. To match this with the standard notion, interpret classical
crisp bisimilarity as a discrete pseudometric~$d$ assigning distance
$0$ to pairs of bisimilar states and $1$ to non-bisimilar ones, and
similarly interpret crisp predicates~$P$ as maps into $\{0,1\}$; then
$P$ is bisimulation-invariant in the usual sense iff~$P$ is
non-expansive w.r.t.\ $d$. Formal definitions for the fuzzy setting
are as follows.
\begin{defn}[Bisimulation-invariant formulas and predicates]
  A formula $\phi$ (either in fuzzy modal logic or in fuzzy FOL, with
  a single free variable) is \emph{bisimulation-invariant} if for all
  models $\CA,\CB$ and all states $a\in A$, $b\in B$,
  \[ |\phi(a) - \phi(b)| \le d^G(a,b). \]
  Similarly, given a model $\CA$, a \emph{(fuzzy) state predicate}
  on~$A$, i.e.\ a function $P\colon A\to [0,1]$, is
  \emph{bisimulation-invariant} if~$P$ is non-expansive w.r.t.\ the
  bisimulation distance $d^G$. In both cases, \emph{depth-$n$
    bisimulation invariance} is defined in the same way using
  depth-$n$ behavioural distance.
\end{defn}
\noindent As expected, Zadeh fuzzy modal logic is bisimulation
invariant; more precisely:
\begin{lem}[Bisimulation invariance]
  \label{lem:bisim-inv-modal}
  Every fuzzy modal formula of rank at most $n$ is depth-$n$
  bisimulation-invariant. 
\end{lem}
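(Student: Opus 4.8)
The plan is to reduce the metric statement to the game and then proceed by structural induction on the formula. Concretely, I would prove the sharper claim: for every fuzzy modal formula $\phi$, every $n\ge\rk(\phi)$, all models $\CA,\CB$, states $a\in A$, $b\in B$, and every $\epsilon\ge 0$, if $D$ wins the depth-$n$ $\epsilon$-bisimulation game for $(a,b)$, then $|\phi(a)-\phi(b)|\le\epsilon$. The lemma follows at once: for the given $n\ge\rk(\phi)$ the quantity $|\phi(a)-\phi(b)|$ is then a lower bound for the set of all $\epsilon$ for which $D$ wins, hence bounded by its infimum $d^G_n(a,b)$.

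The induction is on the structure of $\phi$, with the parameter $n$ kept universally quantified so that subformulas may be treated at the same or smaller depth. The propositional cases are routine and rest on the fact that the relevant operations on $[0,1]$ are non-expansive: for a constant $c$ the difference is $0$; for $\neg$ it is an isometry; for $\ominus c$ the map $t\mapsto\max(t-c,0)$ is $1$-Lipschitz; and for $\land$ one uses that $\min$ is non-expansive in the supremum sense, so $|\min(s_1,s_2)-\min(t_1,t_2)|\le\max_i|s_i-t_i|$. In each case one applies the induction hypothesis to the immediate subformula(s), whose rank does not exceed $\rk(\phi)\le n$. For an atom $p$ we have $\rk(p)=1\le n$, so at least one round is scheduled, and the winning condition that $D$ must maintain before round~$1$ at the initial configuration is exactly $|p^\CA(a)-p^\CB(b)|\le\epsilon$.

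The modal case $\phi=\Diamond\psi$ is the heart of the argument and the only step requiring genuine care, as it is where the $\epsilon$-tolerances of the game have to be matched against the $\min$/$\sup$ shape of the modal semantics. Here $\rk(\Diamond\psi)=1+\rk(\psi)\le n$, so $n\ge 1$ and $\rk(\psi)\le n-1$. By symmetry of the game it suffices to show $(\Diamond\psi)(a)\le(\Diamond\psi)(b)+\epsilon$, and since $(\Diamond\psi)(a)=\bigvee_{a'}R^\CA(a,a')\land\psi(a')$ is a supremum, it is enough to bound each $R^\CA(a,a')\land\psi(a')$ by $(\Diamond\psi)(b)+\epsilon$. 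If $R^\CA(a,a')\le\epsilon$ this is immediate; otherwise $R^\CA(a,a')>\epsilon$, so moving to $a'$ is a legal spoiler move, and $D$'s winning strategy supplies a response $b'$ with $R^\CB(b,b')\ge R^\CA(a,a')-\epsilon$ from which $D$ still wins the residual depth-$(n-1)$ $\epsilon$-game for $(a',b')$. Applying the induction hypothesis to $\psi$ gives $\psi(b')\ge\psi(a')-\epsilon$, whence, using that subtracting a constant commutes with $\min$ and that $\min$ is monotone,
\[
  R^\CB(b,b')\land\psi(b')\ \ge\ (R^\CA(a,a')-\epsilon)\land(\psi(a')-\epsilon)\ =\ \big(R^\CA(a,a')\land\psi(a')\big)-\epsilon .
\]
Since the left-hand side is at most $(\Diamond\psi)(b)$, rearranging yields $R^\CA(a,a')\land\psi(a')\le(\Diamond\psi)(b)+\epsilon$, and taking the supremum over $a'$ closes this direction; the other direction is obtained by the analogous spoiler move in $\CB$.

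The one subtlety to flag explicitly is the residual-game observation — that a $D$-winning strategy in the depth-$n$ game, after the first round has been played as above, restricts to a $D$-winning strategy in the depth-$(n-1)$ game from the new configuration — since this is what licenses the application of the induction hypothesis at reduced depth in the modal case. It is immediate from the definition of the depth-$n$ game, as the invariant required there before rounds $2,\dots,n$ is precisely the one required before rounds $1,\dots,n-1$ of the residual game. It is equally worth recording that the atom condition at the initial configuration is genuinely available because $\rk(p)=1$ forces $n\ge 1$, matching the convention that $D$ trivially wins every depth-$0$ game.
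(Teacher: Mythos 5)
Your proposal is correct and takes essentially the same route as the paper's proof: a structural induction where the propositional cases rest on non-expansivity of the connectives on $[0,1]$ and the $\Diamond$ case matches a spoiler move against the supremum in the modal semantics, using $D$'s reply together with the identity $(x-\epsilon)\land(y-\epsilon)=(x\land y)-\epsilon$. The only (immaterial) differences are that you bound each term $R^\CA(a,a')\land\psi(a')$ of the supremum directly instead of picking a $\delta$-approximating successor and letting $\delta\to 0$ as the paper does, and that you make explicit the depth bookkeeping (the residual depth-$(n-1)$ game, and $\rk(p)=1$ forcing $n\ge 1$ so the winning condition is available at the initial configuration) that the paper leaves implicit.
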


\noindent In particular, for every rank-$n$ modal formula $\phi$ and
every fuzzy relational model $\CA$, the evaluation map
$\phi_\CA:A\to[0,1]$ is a non-expansive map
$\nonexp{(A,d^G_n)}{([0,1],d_e)}$. A forteriori
(Lemma~\ref{lem:metrics}), \emph{every fuzzy modal formula~$\phi$ is
  bisimulation-invariant}, i.e.\ $\phi(\cdot)$ is non-expansive
w.r.t.\ (unbounded-depth) behavioural distance $d^G$. 
\begin{expl}
  The formula $R(x,x)$ in fuzzy FOL fails to be bisimulation-invariant
  (compare a loop with an infinite chain), and is therefore neither
  expressible nor approximable by fuzzy modal formulas.
\end{expl}

\begin{defn}[Logical distance] 
  We further define \emph{logical distances} $d^L$ (w.r.t.\ all modal
  formulas) and $d^L_n$ (w.r.t.\ modal formulas of rank at most~$n$)
  by
  \begin{align*}
    d^L(a,b)&=\textstyle\bigvee_{\phi\text{ modal}}|\phi(a)-\phi(b)|,\\ 
    d^L_n(a,b) & = \textstyle\bigvee_{\rk(\phi) \le n} |\phi(a) - \phi(b)|.
  \end{align*}
\end{defn}
\noindent We clearly have
\begin{equation*}
  d^L_m(a,b)\le d^L_n(a,b)\le d^L(a,b)\quad\text{for $n\ge m\ge 0$},
\end{equation*}
as well as
\begin{equation}\label{eq:log-dist-sup}
  d^L(a,b)=\textstyle\bigvee_{n\ge 0}d^L_n(a,b).
\end{equation}
Using~\eqref{eq:log-dist-sup} and Lemma~\ref{lem:metrics}, we can then
rephrase bisimulation invariance (Lemma~\ref{lem:bisim-inv-modal}) as
\begin{lem}\label{lem:bisim-inv-dist}
  For models $\CA,\CB$, states $a\in A$, $b\in B$, and $n\ge 0$, we
  have
  \begin{equation*}
    d^L_n(a,b)\le d^G_n(a,b)\quad\text{and}\quad
    d^L(a,b)\le d^G(a,b).
  \end{equation*}
\end{lem}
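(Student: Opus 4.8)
The plan is to derive both inequalities directly from the bisimulation invariance of individual modal formulas (Lemma~\ref{lem:bisim-inv-modal}), exploiting throughout the elementary fact that a supremum of a family of reals is bounded above by any common upper bound of that family. First I would establish the depth-$n$ inequality $d^L_n(a,b)\le d^G_n(a,b)$, and then bootstrap the unbounded-depth inequality from it.

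For the depth-$n$ case, recall that $d^L_n(a,b)=\bigvee_{\rk(\phi)\le n}|\phi(a)-\phi(b)|$ is by definition a supremum ranging over all modal formulas $\phi$ of rank at most~$n$. Lemma~\ref{lem:bisim-inv-modal} asserts precisely that every such~$\phi$ is depth-$n$ bisimulation-invariant, i.e.\ $|\phi(a)-\phi(b)|\le d^G_n(a,b)$. Thus $d^G_n(a,b)$ is a common upper bound for the family whose supremum defines $d^L_n(a,b)$, and hence $d^L_n(a,b)\le d^G_n(a,b)$.

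For the unbounded-depth case, I would combine three ingredients. By the depth-$n$ inequality just proved, $d^L_n(a,b)\le d^G_n(a,b)$ for every~$n$; by Lemma~\ref{lem:metrics}, $d^G_n(a,b)\le d^G(a,b)$; chaining these gives $d^L_n(a,b)\le d^G(a,b)$ for every $n\ge 0$. Finally, equation~\eqref{eq:log-dist-sup} expresses $d^L(a,b)$ as the supremum $\bigvee_{n\ge 0}d^L_n(a,b)$, and since $d^G(a,b)$ bounds every member of this family, it also bounds the supremum, yielding $d^L(a,b)\le d^G(a,b)$.

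The argument is essentially a routine unwinding of definitions, so I do not anticipate a genuine obstacle. The only point requiring care is the order-theoretic step, used at both stages, that the join of a family in $[0,1]$ lies below any element dominating every member of the family. All the substantive content has already been packaged into Lemma~\ref{lem:bisim-inv-modal}, whose proof, by an induction on formula structure tracking the moves of the depth-$n$ game, is where the real work lies.
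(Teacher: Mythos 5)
Your proof is correct and follows exactly the route the paper takes: the depth-$n$ inequality is obtained by bounding the supremum defining $d^L_n$ via Lemma~\ref{lem:bisim-inv-modal}, and the unbounded-depth inequality then follows by combining this with Lemma~\ref{lem:metrics} and equation~\eqref{eq:log-dist-sup}, which is precisely how the paper presents the lemma as a rephrasing of bisimulation invariance. You also correctly identify that all substantive work resides in Lemma~\ref{lem:bisim-inv-modal}.
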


\begin{rem}
  Under \L{}ukasiewicz semantics (Remark~\ref{rem:zadeh}),
  non-expansivity clearly breaks; e.g.\ if $a$ and $b$ are states
  without successors in models $\CA$ and $\CB$, respectively, such
  that $p^\CA(a)=0.9$, $p^\CA(b)=0.8$, and $a$ and $b$ agree on all
  other atoms, then $d^G(a,b)=0.1$ but $|\phi(a)-\phi(b)|=0.2$ for the
  formula $\phi=p\land p$, since under \L{}ukasiewicz semantics,
  $\phi(a)=p^\CA(a)+p^\CA(a)-1 = 0.8$ and
  $\phi(b)=p^\CB(b)+p^\CB(b)-1 = 0.6$. See also a similar example
  in~\cite{fl:regular-equ-social-networks}. For a treatment of
  \L{}ukasiewicz fuzzy modal logic, one would thus need to replace
  non-expansivity with Lipschitz continuity (see
  also~\cite{SchroderPattinson11}). Additional problems, however,
  arise with logical distance: Defining a logical distance for
  \L{}ukasiewicz modal logic in analogy to the above definition of
  $d^L$ gives a discrete pseudometric. The reason is that small behavioural
  differences between models can be amplified arbitrarily in
  \L{}ukasiewicz logic using conjunction, as illustrated precisely by
  the above example (where we could also use $p\land p\land p$
  etc.). The statement of a van Benthem theorem for
  \L{}ukasiewicz modal logic would thus presumably become quite
  complicated, e.g.\ would need to stratify over Lipschitz constants.
\end{rem}
\noindent We now launch into the proof of our target result, which
states that \emph{every bisimulation-invariant fuzzy first-order
  property can be approximated by fuzzy modal formulas}, a converse to
bisimulation-invariance of fuzzy modal formulas. As already indicated,
we follow a proof strategy established for the classical setting by
Otto~\cite{o:van-Benthem-Rosen-elementary}: We show that
\begin{itemize}
\item every bisimulation-invariant fuzzy first-order property is
  \emph{$\ell$-local} for some $\ell$, w.r.t.\ a suitable notion of
  Gaifman distance (Section~\ref{sec:locality});
\item every $\ell$-local bisimulation-invariant fuzzy first-order
  property is already depth-$n$ bisimulation-invariant for some~$n$
  (Section~\ref{sec:main}); and
\item every depth-$n$ bisimulation-invariant fuzzy state property is
  approximable by fuzzy modal formulas of rank at most~$n$
  (Section~\ref{sec:modal-approx}).
\end{itemize}
We begin with the last step of this program.

\section{Modal Approximation at Finite Depth}\label{sec:modal-approx}

Having seen game-based and logical behavioural distances $d^L$, $d^G$
in the previous section, we proceed to introduce a third, fixed-point
based definition, and then show that all three distances agree at
finite depth. This happens in a large simultaneous induction, in which
we also prove that \emph{every depth-$n$ bisimulation-invariant fuzzy
  state property is approximable by modal formulas of rank~$n$}. As
indicated in the introduction, this is in fact the technical core of
the paper. This is in sharp contrast with the classical setting, where
the corresponding statement -- every depth-$n$ bisimulation-invariant
crisp state property is expressible by a crisp modal formula -- is
completely straightforward.
\begin{assn}\label{ass:fin-atoms}
  As usual in proofs of van Benthem type results, we assume from now
  on that $\At=\{p_1,\dots,p_k\}$ is finite. This is w.l.o.g.\ for
  purposes of the proof of our main result, as we will aim to show
  modal approximability of a given formula, so only finitely many
  atoms are relevant. Note that, e.g.,
  Theorem~\ref{thm:modal-approx}.\ref{item:tot-bounded} (total boundedness
  of finite-depth behavioural distance) will presumably not hold
  without this assumption.
\end{assn}
\noindent The fixed-point definition of behavioural distance is based
on the \emph{Kantorovich
  lifting}~\cite{bbkk:behavioral-metrics-functor-lifting}.  We first
define an \emph{evaluation function}
\begin{math}
\mathit{ev}\colon \ffun[0,1]\to [0,1]
\end{math}
(recall from Section~\ref{sec:logics} that $\ffun X=[0,1]^X$ and
$\mfun X=[0,1]^\At\times\ffun X$) by
\begin{equation*}
\mathit{ev}(g) = \textstyle\bigvee_{s\in[0,1]} (g(s)\land s)\quad\text{for
$g\colon [0,1]\to [0,1]$}.
\end{equation*}
Given a pseudometric space $(X,d)$, we define the \emph{Kantorovich
  pseudometrics} $d^\ffun$ on $\ffun X$ and $d^\mfun$ on $\mfun X$,
respectively, by
\begin{gather*}
   d^\ffun(g_1,g_2) =
   \quad\bigvee_{\mathclap{f: \nonexp{(X,d)}{([0,1],d_e)}}}\quad
   |\mathit{ev}\circ\ffun f(g_1) - \mathit{ev}\circ\ffun f(g_2)| \\
   d^\mfun((r_1,g_1),(r_2,g_2)) =
   d^\ffun(g_1,g_2)\lor\textstyle\bigvee_{p\in\At}|r_1(p)-r_2(p)| 
\end{gather*}
for $r_i \in [0,1]^\At$ and $g_i \in \ffun X$. It follows from general
results on lifting metrics along
functors~\cite{bbkk:behavioral-metrics-functor-lifting} that $d^\ffun$
and $d^\mfun$ are indeed pseudometrics. 

Given a fuzzy relational model $\CA = (A,(p^\CA)_{p\in\At},R^\CA)$,
viewed as a coalgebra $\alpha\colon A\to \mfun A$ as discussed in
Section~\ref{sec:logics}, we can inductively define a sequence of
pseudometrics $(d^K_n)_{n\ge 0}$ on $A$ via the Kantorovich lifting:
\begin{equation*}
  d^K_0(a,b) = 0 
  \text{\quad and \quad} 
  d^K_{n+1} = (d^K_n)^\mfun \circ (\alpha\times\alpha),
\end{equation*}
that is, expanding definitions,
\begin{equation*}
  d^K_{n+1}(a,b) = \bigvee_{p \in \textrm{At}}|p(a)-p(b)| \lor
  \bigvee_{\mathclap{f: \nonexp{(A,d^K_n)}{([0,1],d_e)}}}
  | (\Diamond f)(a) - (\Diamond f)(b) | .
\end{equation*}
(The $d^K_n$ can be seen as approximants of a fixed point, which is
not itself needed here.)


\begin{expl}
  To illustrate the three forms of behavioural distance (logical,
  game-based, and via the Kantorovich lifting), we use the following
  model $\mathcal{A}$ with one propositional atom $p$. In the diagram
  below, each state has the form $x[p:p^\mathcal{A}(x)]$, and
  transitions from $x$ to $y$ are labelled with their truth values
  $R^\mathcal{A}(x,y)$.
  \[
    \xymatrix@C=0mm{
      & 1[p:1] \ar[dl]_{0.5} \ar[dr]^{0.5} & & & 4[p:1] \ar[dl]_{0.4}
      \ar[dr]^{0.3} & \\
      2[p:1] & & 3[p:0.9] & 5[p:0.8] & & 6[p:0.9]
    }
  \]
  Clearly, it suffices to look at depth~$2$ in this example. The
  game-based distance of $1,4$ is $d^G(1,4)=d_2^G(1,4) = 0.2$. To see this,
  first note that~$D$ has a winning strategy for $\epsilon = 0.2$:
  Player~$S$ may pick any transition, and $D$ then always has a
  transition available as a reply; irrespective of their choices, they
  end up in a pair of states with values of $p$ differing by at most
  $0.2$, and then~$S$ needs to move but cannot. The situation is
  different for $\epsilon < 0.2$: In this case,~$S$ can take the
  transition from $1$ to $2$, which~$D$ must answer by going from $4$
  to $5$, since $R^\CA(4,6) \ngeq R^\CA(1,2) - \epsilon$. But
  $|p^\CA(2)-p^\CA(5)|=0.2>\epsilon$, so $S$ wins.

  This distance is witnessed by the formula
  $\phi = \Diamond (p\ominus 0.5)$:
  \begin{align*}
    \phi(1) & =  (0.5 \land (p^\CA(2)\ominus 0.5))
    \lor (0.5 \land (p^\CA(3)\ominus 0.5)) = 0.5 \\
    \phi(4) & =  (0.4 \land (p^\CA(5)\ominus 0.5))
    \lor (0.3 \land (p^\CA(6)\ominus 0.5)) = 0.3,
  \end{align*}
  so $d^L(1,4)=d_2^L(1,4)=0.2$ (recall $\rk(\phi)=2$) by
  Lemma~\ref{lem:bisim-inv-dist}. Note that $\Diamond p$ would only
  yield a difference of $0.1$.

  As to the Kantorovich distances, we have $d^K_0(1,4)=0$, so that
  the~$f$ over which the supremum in the definition of $d^K_1(1,4)$ is
  taken are all constant; it is then easily seen that
  $d^K_1(1,4)=0.1$, the difference between the maximal transition
  degree from~$1$ ($0.5$) and that from~$4$ ($0.4$). The function
  corresponding to $p\ominus 0.5$ then serves as a witness of the
  behavioural distance at depth~$2$, so that $d_2^K(1,4) \ge 0.2$; one
  can check that in fact $d_2^K(1,4)= 0.2$.
\end{expl}
\noindent The main result proved in this section is then the following
theorem, which as indicated above states in particular that the
definitions of behavioural distance coincide at finite depth and that
the modal formulas lie dense in the non-expansive state properties:
\begin{thm}
  \label{thm:modal-approx}
  Let $\CA$ be a fuzzy relational model.  Then the following holds for
  all $n \ge 0$.
  \begin{enumerate}
    \item\label{item:metrics-equal}
      $d^G_n = d^L_n = d^K_n =: d_n$ on $\CA$.
    \item\label{item:tot-bounded}
      The pseudometric space $(A,d_n)$ is totally bounded.
    \item\label{item:modal-approx}
      The modal formulas of rank at most $n$ form a dense subset of
      the space $\nonexp{(A,d_n)}{([0,1],d_e)}$.
  \end{enumerate}
\end{thm}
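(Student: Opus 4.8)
The plan is to establish all three statements simultaneously by induction on~$n$, where at each step the three claims at level~$n$ are consumed in the order (i)$\to$(ii)$\to$(iii) to produce those at level~$n+1$. The base case $n=0$ is immediate: $d^G_0\equiv 0$ by Remark~\ref{rem:zero-game}, $d^K_0\equiv 0$ by definition, and $d^L_0\equiv 0$ because rank-$0$ formulas are built from truth constants only (as $\rk(p)=1$) and hence are constant-valued; thus $d_0\equiv 0$, the space $(A,d_0)$ is trivially totally bounded, and the non-expansive maps $\nonexp{(A,d_0)}{([0,1],d_e)}$ are exactly the constants, which rational truth constants approximate.

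For the inductive step I assume (i)--(iii) at level~$n$ and write $d_n$ for the common metric. For (i) I would close the cycle $d^K_{n+1}\le d^L_{n+1}\le d^G_{n+1}\le d^K_{n+1}$; the middle step is Lemma~\ref{lem:bisim-inv-dist}. For $d^K_{n+1}\le d^L_{n+1}$, the atomic part of $d^K_{n+1}$ consists of rank-$1$ formulas, while for the modal part I take $f$ non-expansive w.r.t.\ $d^K_n=d_n$, approximate it by a rank-$n$ modal $\psi$ using density at level~$n$, and transfer along $\Diamond$ via $\supnorm{\Diamond f-\Diamond\psi}\le\supnorm{f-\psi}$ (Lemma~\ref{lem:diamond-nonexp}), so that $|(\Diamond f)(a)-(\Diamond f)(b)|$ lies within $2\epsilon$ of the rank-$(n+1)$ value $|(\Diamond\psi)(a)-(\Diamond\psi)(b)|$. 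The remaining inequality $d^G_{n+1}\le d^K_{n+1}$ is a transportation step: for $\epsilon>d^K_{n+1}(a,b)$ and a spoiler move $a'$ with $R(a,a')=r>\epsilon$, I test with the non-expansive bump $f(x)=\max(0,r-d_n(a',x))$; from $(\Diamond f)(a)\ge r$ and $|(\Diamond f)(a)-(\Diamond f)(b)|<\epsilon$ the join defining $(\Diamond f)(b)$ exceeds $r-\epsilon$, so it is witnessed by some $b'$ with $R(b,b')>r-\epsilon$ and $d_n(a',b')<\epsilon$; by (i) at level~$n$ this gives $d^G_n(a',b')<\epsilon$, and since winning an $\epsilon'$-game entails winning every $\epsilon$-game with $\epsilon\ge\epsilon'$, duplicator can continue to win the residual depth-$n$ $\epsilon$-game.

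For (ii) I would use the expansion of $d^K_{n+1}$ together with the fact that total boundedness of $(A,d_n)$ makes $\nonexp{(A,d_n)}{([0,1],d_e)}$ totally bounded by Arzel\`a--Ascoli. Fixing a finite $\epsilon$-net $f_1,\dots,f_m$ of these functions, the modal join in $d^K_{n+1}(a,b)$ is within $2\epsilon$ of $\max_j|(\Diamond f_j)(a)-(\Diamond f_j)(b)|$, so $d_{n+1}$ is controlled up to $2\epsilon$ by the finite tuple of maps $\{p\}_{p\in\At}\cup\{\Diamond f_j\}_j$ into $[0,1]^{k+m}$ (finiteness of $\At$, Assumption~\ref{ass:fin-atoms}, is essential here). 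Pulling back a finite $\epsilon$-net of the cube then yields a finite $3\epsilon$-net of $(A,d_{n+1})$.

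Part (iii) is the technical core and the main obstacle. I would write an arbitrary non-expansive $g\colon(A,d_{n+1})\to([0,1],d_e)$ in McShane--Whitney form $g(x)=\bigvee_{y\in A}\max(0,g(y)-d_{n+1}(x,y))$ and truncate the join to a finite $\epsilon$-net $y_1,\dots,y_N$ of $(A,d_{n+1})$ (available from (ii)), which approximates $g$ uniformly. It then suffices to express each cone $\max(0,g(y_i)-d_{n+1}(x,y_i))$ as a rank-$(n+1)$ modal formula, and the crux -- the step that is trivial in the crisp case but not here -- is that the partially evaluated distance $x\mapsto d_{n+1}(x,y_i)$ is itself rank-$(n+1)$ modally approximable. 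Its atomic part is a finite join (finitely many atoms) of terms $|p(x)-p(y_i)|$, expressible through $\ominus$ and $\neg$ with $p(y_i)$ frozen to a rational constant; its modal part, after replacing the supremum over non-expansive $f$ by a finite Arzel\`a--Ascoli net and each net function by a rank-$n$ modal $\psi_j$ (density at level~$n$), becomes a finite join of terms $|(\Diamond\psi_j)(x)-c_j|$ with $c_j$ the frozen rational $(\Diamond\psi_j)(y_i)$. Feeding this into the cone via the identity $c\ominus\phi=\neg\phi\ominus(1-c)$ yields the required rank-$(n+1)$ formula, and the finite join over~$i$ finishes the approximation of $g$. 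The hard part, which has no counterpart in the two-valued setting, is exactly this conversion of behavioural distances into formulas, forcing the interplay of the McShane representation, the Arzel\`a--Ascoli finitisations over both the state net and the test-function net, and the rank-$n$ induction hypothesis.
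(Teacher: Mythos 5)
Your proposal is correct, and for items (i) and (ii) it coincides with the paper's argument in all essentials: the cycle $d^K_{n+1}\le d^L_{n+1}\le d^G_{n+1}\le d^K_{n+1}$, with the middle inequality supplied by Lemma~\ref{lem:bisim-inv-dist}, is exactly how the paper proceeds in Lemmas~\ref{lem:metrics-equal-LK} and~\ref{lem:metrics-equal-GK} (your test function $f(x)=R(a,a')\ominus d_n(a',x)$ is literally the one used there), and your proof of (ii) is the paper's Lemma~\ref{lem:tot-bounded}, i.e.\ an approximate isometry into the cube $[0,1]^{k+m}$ obtained from the Arzel\`a--Ascoli variant (Lemma~\ref{lem:arzela-ascoli}); that you index the cube by net \emph{functions} $\Diamond f_j$ rather than net \emph{formulas} $\Diamond\phi_i$ is immaterial. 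For item (iii), however, you take a genuinely different route. The paper (Lemma~\ref{lem:modal-approx}) invokes a lattice Stone--Weierstra\ss{} lemma for totally bounded spaces (Lemma~\ref{lem:stone-weierstrass}): since $\modf{n+1}$ is closed under $\land$ and $\lor$, it suffices to approximate a given non-expansive $f$ at each \emph{pair} of points $a,b$, which is done by picking $\psi\in\modf{n+1}$ with $|\psi(a)-\psi(b)|$ close to $|f(a)-f(b)|$ (available since $d_{n+1}=d^L_{n+1}$) and then rescaling via the short algebraic expression $\neg(\neg((\psi\ominus u)\land v)\ominus w)$ for suitable rational $u,v,w$. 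You instead approximate globally: the McShane--Whitney representation $g(x)=\bigvee_{y}\bigl(g(y)\ominus d_{n+1}(x,y)\bigr)$, truncated to a finite net from (ii), reduces everything to showing that the partial distance functions $d_{n+1}(\cdot,y_i)$ lie in the closure of $\modf{n+1}$, which you obtain by finitising the Kantorovich supremum through an Arzel\`a--Ascoli net over $\nonexp{(A,d_n)}{([0,1],d_e)}$, replacing net functions by rank-$n$ formulas via the induction hypothesis, and freezing values at $y_i$ to rational constants (using $c\ominus\phi=\neg\phi\ominus(1-c)$ and $|\phi-c|=(\phi\ominus c)\lor(\neg\phi\ominus(1-c))$, all rank-preserving). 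Both arguments are sound and rest on the same pillars (total boundedness at level $n$, Lemma~\ref{lem:diamond-nonexp}, and rank-$n$ density). The paper's two-point interpolation keeps the $\epsilon$-bookkeeping minimal and packages the compactness argument into the reusable Lemma~\ref{lem:stone-weierstrass}; your construction involves an extra net over the function space and more error tracking, but is more informative, as it explicitly exhibits the distance-to-a-point functions $d_{n+1}(\cdot,y)$ as modally approximable --- a formula-level witness of the finite-depth Hennessy--Milner property that the paper obtains only implicitly.
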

\noindent (We note that the equality $d^L_n=d^G_n$ is effectively the
finite-depth part of a Hennessy-Milner property; the infinite-depth
version will, of course, hold only under finite branching. This
contrasts somewhat with the probabilistic
case~\cite{BreugelWorrell05}.)

\begin{proof}[Proof (sketch)]
  We prove all claims simultaneously by induction on $n$. The base
  case $n=0$ is trivial. The proof of the induction step is split over
  a number of lemmas proved next:
  \begin{itemize}
    \item Item~\ref{item:metrics-equal} is proved in
    Lemmas~\ref{lem:metrics-equal-LK} and \ref{lem:metrics-equal-GK}.
    \item Item~\ref{item:tot-bounded} is proved in
    Lemma~\ref{lem:tot-bounded}.
  \item Item~\ref{item:modal-approx} is proved in
    Lemma~\ref{lem:modal-approx}.\qedhere
  \end{itemize}
\end{proof}
\noindent For the remainder of this section, we fix a model $\CA$ as
in Theorem~\ref{thm:modal-approx} and $n > 0$, and assume as inductive
hypothesis that all claims in Theorem~\ref{thm:modal-approx} already
hold for all $n' < n$.

\begin{lem}\label{lem:metrics-equal-LK}
  We have $d^L_n = d^K_n$ on $\CA$.
\end{lem}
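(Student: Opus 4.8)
The plan is to prove the two inequalities $d^L_n \le d^K_n$ and $d^K_n \le d^L_n$ separately, using freely the inductive hypothesis that $d^K_{n-1}=d_{n-1}$ and that the rank-$(n-1)$ modal formulas are dense in $\nonexp{(A,d^K_{n-1})}{([0,1],d_e)}$ (Item~\ref{item:modal-approx} of Theorem~\ref{thm:modal-approx} at $n-1$). I would first record that the $d^K_m$ are monotone in $m$: the Kantorovich lifting $(-)^\mfun$ is monotone (a larger underlying pseudometric admits more non-expansive maps, hence a larger supremum), so from $d^K_0=0\le d^K_1$ one gets $d^K_m\le d^K_{m+1}$ by an easy induction.

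For $d^L_n \le d^K_n$ I would show, by structural induction on a modal formula $\phi$, that $\phi$ is non-expansive as a map $\nonexp{(A, d^K_{\rk(\phi)})}{([0,1],d_e)}$, i.e.\ $|\phi(a)-\phi(b)| \le d^K_{\rk(\phi)}(a,b)$. Constants and atoms are immediate from the definition of $d^K$ (the atom summand $\bigvee_{p\in\At}|p(a)-p(b)|$ appears explicitly, and $\rk(p)=1$). The cases $\neg\psi$ and $\psi\ominus c$ leave the rank unchanged and can only decrease distances, while $\land$ is non-expansive in the supremum metric with $\rk(\psi_1\land\psi_2)=\max(\rk(\psi_1),\rk(\psi_2))$; in each case the claim for $\phi$ follows from the inductive claim(s) for the immediate subformula(s) together with monotonicity of $d^K$. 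The crucial case is $\phi=\Diamond\psi$, where $\rk(\phi)=1+\rk(\psi)$: by induction $\psi$ is non-expansive w.r.t.\ $d^K_{\rk(\psi)}$, so $\psi$ is precisely one of the functions $f$ over which the diamond supremum in the definition of $d^K_{1+\rk(\psi)}$ ranges, whence $|(\Diamond\psi)(a)-(\Diamond\psi)(b)| \le d^K_{\rk(\phi)}(a,b)$. Since $d^K_{\rk(\phi)}\le d^K_n$ whenever $\rk(\phi)\le n$, taking the supremum over all such $\phi$ yields $d^L_n\le d^K_n$.

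For the converse $d^K_n \le d^L_n$ I would bound each of the two summands of $d^K_n$ by $d^L_n$. The atom summand is bounded by $d^L_n$ since each $p$ is a rank-$1$ (hence rank-$\le n$, as $n\ge 1$) modal formula. For the diamond summand, fix $f$ non-expansive w.r.t.\ $d^K_{n-1}$ and let $\eta>0$; by the density part of the inductive hypothesis there is a modal formula $\psi$ with $\rk(\psi)\le n-1$ and $\supnorm{f-\psi}\le\eta$. As $f\mapsto\Diamond f$ is non-expansive in the supremum metric (Lemma~\ref{lem:diamond-nonexp}), we get $\supnorm{\Diamond f-\Diamond\psi}\le\eta$, so that $|(\Diamond f)(a)-(\Diamond f)(b)| \le |(\Diamond\psi)(a)-(\Diamond\psi)(b)|+2\eta \le d^L_n(a,b)+2\eta$, the final step because $\rk(\Diamond\psi)\le n$. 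Letting $\eta\to 0$ and taking the supremum over all admissible $f$ shows the diamond summand is at most $d^L_n$, hence $d^K_n\le d^L_n$.

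The main obstacle is this second inequality, specifically the step turning an arbitrary non-expansive witness $f$ of the Kantorovich supremum into a genuine modal formula. This is exactly where the inductive density statement (Item~\ref{item:modal-approx} at $n-1$) is indispensable, and it is the reason the three distances have to be proved equal in a single simultaneous induction: without density one cannot replace $f$ by a rank-$(n-1)$ formula, and it is this replacement, combined with non-expansivity of $\Diamond$, that lets logical distance dominate the fixed-point distance. By contrast, the forward inequality $d^L_n\le d^K_n$ is a routine ``bisimulation invariance w.r.t.\ the Kantorovich metric'' argument by induction on formula structure.
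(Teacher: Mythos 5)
Your proof is correct and follows essentially the same route as the paper's: both directions rest on exactly the paper's ingredients --- the inductive density of $\modf{n-1}$ in $\nonexp{(A,d_{n-1})}{([0,1],d_e)}$ combined with non-expansivity of $\Diamond$ (Lemma~\ref{lem:diamond-nonexp}) for $d^K_n\le d^L_n$, and non-expansivity of the propositional connectives (the calculations of Lemma~\ref{lem:bisim-inv-modal}) for $d^L_n\le d^K_n$. The only organizational differences are that the paper packages the two inequalities as a chain of equalities (via continuity of the map $f\mapsto|(\Diamond f)(a)-(\Diamond f)(b)|$ together with the decomposition of $\modf{n}$ into Boolean combinations of atoms and diamonds), whereas you separate them and insert the easily verified monotonicity lemma $d^K_m\le d^K_{m+1}$, which the paper implicitly replaces by the trivial monotonicity of $d^L_m$.
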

\begin{proof}[Proof (sketch)]
  Let $a,b\in A$ and put $F:=\nonexp{(A,d_{n-1})}{([0,1],d_e)}$. By
  Lemma~\ref{lem:diamond-nonexp}, the map
  \[ H\colon (F,d_\infty) \to ([0,1],d_e), f \mapsto |(\Diamond f)(a)
  - (\Diamond f)(b)| \]
  is continuous. Since by the induction hypothesis, $\modf{n-1}$ is
  dense in $F$, it follows that $H[\modf{n-1}]$ is dense in
  $H[F]$. Thus,
  \begin{align*}
    d^K_n(a,b) = & \bigvee_{p \in \textrm{At}}|p(a)-p(b)| \lor
                   \bigvee_{\mathclap{f: \nonexp{(A,d_{n-1})}{([0,1],d_e)}}}
                   | (\Diamond f)(a) - (\Diamond f)(b) | \\
    = & \bigvee_{p \in \textrm{At}}|p(a)-p(b)| \lor
  \bigvee_{\mathclap{\rk\phi\le n-1}}
        | (\Diamond \phi)(a) - (\Diamond \phi)(b) | \\
    = & \bigvee_{\mathclap{\rk\phi\le n}}|\phi(a) - \phi(b)|
        = d^L_n(a,b). \qedhere
  \end{align*}
\end{proof}

\begin{lem}\label{lem:metrics-equal-GK}
  We have $d^G_n = d^K_n$ on $\CA$.
\end{lem}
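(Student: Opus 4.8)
The plan is to establish the two inequalities $d^K_n \le d^G_n$ and $d^G_n \le d^K_n$ separately, noting throughout that on the single model $\CA$ the game compares states $a,b \in A$ with both players moving inside $\CA$, and that $d^K_n$ is symmetric. The inequality $d^K_n \le d^G_n$ requires no new work: Lemma~\ref{lem:metrics-equal-LK} gives $d^K_n = d^L_n$, and Lemma~\ref{lem:bisim-inv-dist} gives $d^L_n \le d^G_n$, so $d^K_n = d^L_n \le d^G_n$. All the content is in the converse $d^G_n \le d^K_n$.

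For the converse I would fix $a,b$ and an arbitrary $\epsilon > d^K_n(a,b)$ and show that $D$ wins the depth-$n$ $\epsilon$-game from $(a,b)$; letting $\epsilon$ decrease to $d^K_n(a,b)$ then yields $d^G_n(a,b) \le d^K_n(a,b)$ by the definition of $d^G_n$ as an infimum. It is cleanest to prove the slightly more general statement, by induction on $m$, that $d^K_m(a,b) < \epsilon$ implies $D$ wins the depth-$m$ $\epsilon$-game from $(a,b)$, for all $a,b \in A$. The base case $m = 0$ is Remark~\ref{rem:zero-game}. In the step ($m \ge 1$), the assumption $d^K_m(a,b) < \epsilon$ first secures the winning invariant before the initial round, since $|p(a)-p(b)| \le d^K_m(a,b) < \epsilon$ for every atom $p$; it then remains to give $D$ a legal response to any spoiler move that lands in a configuration of $d^K$-distance below $\epsilon$ at depth $m-1$, after which the induction hypothesis wins the remaining depth-$(m-1)$ game. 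By symmetry I may assume $S$ picks a successor $a'$ of $a$ with $R(a,a') = t > \epsilon$, and the task reduces to finding $b'$ with $R(b,b') \ge t - \epsilon$ and $d^K_{m-1}(a',b') < \epsilon$.

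The heart of the argument --- and the step I expect to be the main obstacle --- is to show that such a $b'$ must exist, which I would do by contradiction. Suppose every candidate $b'$ fails, i.e.\ satisfies $R(b,b') < t - \epsilon$ or $d^K_{m-1}(a',b') \ge \epsilon$. I would then exhibit a single non-expansive test function witnessing $d^K_m(a,b) \ge \epsilon$, namely the truncated distance
\[ g(x) = \max\bigl(0,\ t - d^K_{m-1}(a',x)\bigr), \]
which takes values in $[0,t] \subseteq [0,1]$ and is non-expansive from $(A,d^K_{m-1})$ to $([0,1],d_e)$, because $x \mapsto d^K_{m-1}(a',x)$ is $1$-Lipschitz by the triangle inequality and truncation at $0$ is non-expansive. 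Since $g(a') = t$ and $R(a,a') = t$, evaluating the modality at $a$ gives $(\Diamond g)(a) \ge R(a,a') \land g(a') = t$. Evaluating at $b$, each successor $b'$ contributes $R(b,b') \land g(b') \le t - \epsilon$: if $R(b,b') < t-\epsilon$ this is clear, and otherwise $d^K_{m-1}(a',b') \ge \epsilon$ forces $g(b') \le t-\epsilon$ (here $t > \epsilon$ is used); hence $(\Diamond g)(b) \le t - \epsilon$. Therefore $(\Diamond g)(a) - (\Diamond g)(b) \ge \epsilon$, and since $g$ is non-expansive w.r.t.\ $d^K_{m-1}$, the term $|(\Diamond g)(a) - (\Diamond g)(b)|$ occurs in the join defining $d^K_m(a,b)$; hence $d^K_m(a,b) \ge |(\Diamond g)(a) - (\Diamond g)(b)| \ge \epsilon$, contradicting $d^K_m(a,b) < \epsilon$. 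This contradiction produces the required response $b'$, completes the induction step, and hence the lemma.

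Beyond this crux the remaining steps are bookkeeping: the invariant check is a one-line consequence of $|p(a)-p(b)| \le d^K_m(a,b)$, the symmetric case in which $S$ moves from $b$ is handled by the mirror-image test function, and the passage from the finite-depth games to the distance $d^G_n$ is just the definition of $d^G_n$ as an infimum together with the choice $\epsilon > d^K_n(a,b)$. I expect all the difficulty to be concentrated in designing $g$ so that it is simultaneously non-expansive, large at the spoiler's target $a'$, and small both on weak successors of $b$ and on states far from $a'$; the truncated distance function does exactly this and is precisely what makes the failure of a duplicator response visible to the Kantorovich supremum.
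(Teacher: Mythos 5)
Your proof is correct and follows essentially the same route as the paper's: the easy inequality $d^K_n \le d^G_n$ via Lemmas~\ref{lem:metrics-equal-LK} and~\ref{lem:bisim-inv-dist}, and for the converse the very same test function $R(a,a') \ominus d^K_{m-1}(a',\cdot)$ used to extract duplicator's reply from the Kantorovich supremum. The only differences are bookkeeping: you run a self-contained nested induction on the depth $m$ and handle the possibly non-attained supremum in $(\Diamond g)(b)$ by contradiction from the strict inequality $\epsilon > d^K_m(a,b)$, whereas the paper leans on the ambient simultaneous induction (so that $d_{n-1} = d^G_{n-1}$ after the first round) and approximates the supremum within $\tfrac{\delta}{2}$, winning the $(\epsilon+\delta)$-game for every $\delta > 0$.
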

\begin{proof}[Proof (sketch)]
  First, $d^K_n(a,b) = d^L_n(a,b) \le d^G_n(a,b)$ for all $a,b$ by
  Lemmas~\ref{lem:metrics-equal-LK} and~\ref{lem:bisim-inv-modal}.
  
  In the other direction, if $d^K_n(a,b)\le\epsilon$, we need to show
  that~$D$ wins the $(\epsilon+\delta)$-game on $(a,b)$ for all
  $\delta>0$. W.l.o.g.~$S$ moves from $a$ to some $a'$. We can
  instantiate the function $f$ in the definition of $d^K_n$ as
  \begin{equation*}
    f\colon \nonexp{(A,d_{n-1})}{([0,1],d_e)},
    f(b') = R(a,a') \ominus d_{n-1}(a',b').
  \end{equation*}
  A winning reply for $D$ can now be extracted by taking a state $b'$
  that approximates the supremum in $(\Diamond f)(b)$ sufficiently
  closely.  One checks that $b'$ is a legal move and that
  $d_{n-1}(a',b')<\epsilon+\delta$, so~$D$ wins.
\end{proof}
\noindent Having shown that the pseudometrics $d_n^L$, $d_n^G$,
$d_n^K$ coincide, we will now use $d_n$ to denote any of them, as
indicated in Theorem~\ref{thm:modal-approx}.

The next lemma is a version of the Arzel\`a-Ascoli theorem for total
boundedness instead of compactness and non-expansive instead of
continuous functions; that is, we impose weaker assumptions on the
space but stronger assumptions on the functions.

\begin{lem}\label{lem:arzela-ascoli}
  Let $(X,d_1),(Y,d_2)$ be totally bounded pseudometric spaces. Then
  the space $\nonexp{(X,d_1)}{(Y,d_2)}$, equipped with the supremum
  pseudometric, is totally bounded.
\end{lem}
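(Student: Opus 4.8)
The plan is to prove a total-boundedness analogue of the Arzelà–Ascoli theorem by directly constructing a finite $\epsilon$-cover of the space $\nonexp{(X,d_1)}{(Y,d_2)}$ of non-expansive functions. The key idea is that a non-expansive function is determined up to small error by its values on a finite $\delta$-net of the domain, and on such a net only finitely many value-patterns can arise up to small error because the codomain is itself totally bounded. So the whole argument is a discretization-and-counting argument, exploiting that \emph{both} spaces are totally bounded and that the functions are \emph{non-expansive} (not merely continuous), which is exactly what lets us quantify the ``wiggle room'' uniformly.

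More concretely, fix $\epsilon>0$. First I would use total boundedness of $(X,d_1)$ to pick a finite $\frac{\epsilon}{3}$-net $\{x_1,\dots,x_m\}\subseteq X$, and total boundedness of $(Y,d_2)$ to pick a finite $\frac{\epsilon}{3}$-net $\{y_1,\dots,y_N\}\subseteq Y$. To each non-expansive $f$ associate the ``fingerprint'' $\sigma(f)\in\{1,\dots,N\}^m$ obtained by sending each sample point $x_i$ to the index of some net point $y_{\sigma(f)(i)}$ with $d_2(f(x_i),y_{\sigma(f)(i)})\le\frac{\epsilon}{3}$. There are only finitely many possible fingerprints, namely at most $N^m$. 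For each fingerprint $\sigma$ that is actually realized by some non-expansive function, choose one representative $f_\sigma$; this yields a finite family $\{f_\sigma\}$, and the claim will be that these representatives form a finite $\epsilon$-cover of $\nonexp{(X,d_1)}{(Y,d_2)}$ in the supremum pseudometric $d_\infty$.

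The core estimate is that any two non-expansive $f,g$ with the same fingerprint satisfy $d_\infty(f,g)\le\epsilon$. To see this, take an arbitrary $x\in X$ and pick a sample point $x_i$ with $d_1(x,x_i)\le\frac{\epsilon}{3}$. Non-expansivity of $f$ and $g$ gives $d_2(f(x),f(x_i))\le\frac{\epsilon}{3}$ and $d_2(g(x),g(x_i))\le\frac{\epsilon}{3}$; since $f$ and $g$ agree on their $i$-th fingerprint coordinate, both $f(x_i)$ and $g(x_i)$ lie within $\frac{\epsilon}{3}$ of the same net point $y_{\sigma(i)}$, whence $d_2(f(x_i),g(x_i))\le\frac{2\epsilon}{3}$. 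Chaining these through the triangle inequality yields $d_2(f(x),g(x))\le\frac{\epsilon}{3}+\frac{2\epsilon}{3}+\frac{\epsilon}{3}$; tightening the constants (e.g.\ starting from a $\frac{\epsilon}{4}$-net on each side) gives the bound $\le\epsilon$ uniformly in $x$, hence $d_\infty(f,g)\le\epsilon$. Thus any non-expansive $f$ lies within $\epsilon$ of the representative sharing its fingerprint, and the finite family $\{f_\sigma\}$ is the desired $\epsilon$-cover.

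The only genuinely delicate point — and where I would be most careful — is the bookkeeping of constants so that the final supremum bound really comes out $\le\epsilon$ rather than some larger multiple; this is purely a matter of choosing the net radii small enough at the outset (using $\frac{\epsilon}{4}$ in place of $\frac{\epsilon}{3}$ makes the three-term chain close comfortably). I expect no essential obstacle beyond this, since the non-expansivity hypothesis does all the heavy lifting that uniform equicontinuity would do in the classical Arzelà–Ascoli statement, and we do not need completeness because we are proving total boundedness rather than compactness.
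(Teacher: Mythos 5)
Your proposal is correct and takes essentially the same route as the paper: your ``fingerprint'' classes are exactly the paper's sets $F_\rho$ indexed by maps $\rho\colon\{1,\dots,n\}\to\{1,\dots,k\}$, and the paper likewise uses $\tfrac{\epsilon}{4}$-covers of both spaces and closes the same triangle-inequality chain (there written as four terms through the net point $y_{\rho(i)}$) to bound each class's diameter by $\epsilon$. Your final tightening of the net radius from $\tfrac{\epsilon}{3}$ to $\tfrac{\epsilon}{4}$ is precisely the paper's choice of constants, so there is no gap.
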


\noindent The following lemma, the inductive step for
Theorem~\ref{thm:modal-approx}.\ref{item:tot-bounded}, then guarantees
that our variant of Arzel\`a-Ascoli will actually apply to
$(A,d_n)$ in the next round of the induction.
\begin{lem}\label{lem:tot-bounded}
  $(A,d_n)$ is a totally bounded pseudometric space.
\end{lem}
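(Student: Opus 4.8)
The plan is to exploit the fact that, by construction, $d_n=d^K_n=(d_{n-1})^\mfun\circ(\alpha\times\alpha)$, so that the structure map $\alpha\colon A\to\mfun A$ of $\CA$ is \emph{distance-preserving} when read as a map $(A,d_n)\to(\mfun A,(d_{n-1})^\mfun)$. Total boundedness passes to subspaces and is reflected along distance-preserving maps (a finite $\epsilon$-cover of the image, with centres in the image, pulls back to a finite $\epsilon$-cover of the domain), so it suffices to prove that $(\mfun A,(d_{n-1})^\mfun)$ is totally bounded. Now $\mfun A=[0,1]^\At\times\ffun A$, and $(d_{n-1})^\mfun$ is the maximum of the supremum distance on the $[0,1]^\At$-component and $(d_{n-1})^\ffun$ on the $\ffun A$-component. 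Since a finite product of totally bounded spaces (with the max metric) is again totally bounded, and $[0,1]^\At=[0,1]^{|\At|}$ is compact, hence totally bounded, precisely because $\At$ is finite (Assumption~\ref{ass:fin-atoms}), the whole problem reduces to showing that $(\ffun A,(d_{n-1})^\ffun)$ is totally bounded.

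For this I would set $F:=\nonexp{(A,d_{n-1})}{([0,1],d_e)}$ and define a map $\Phi\colon\ffun A\to\nonexp{(F,d_\infty)}{([0,1],d_e)}$ by $\Phi(g)(f)=\mathit{ev}\circ\ffun f(g)$. Unfolding the definitions gives $\mathit{ev}\circ\ffun f(g)=\bigvee_{a\in A}(g(a)\land f(a))$, and the estimate underlying Lemma~\ref{lem:diamond-nonexp} shows that each $\Phi(g)$ is non-expansive in its argument $f$, so $\Phi$ is well defined. Moreover $\Phi$ is distance-preserving: by the very definition of the Kantorovich lifting, $d_\infty(\Phi(g_1),\Phi(g_2))=\sup_{f\in F}|\mathit{ev}\circ\ffun f(g_1)-\mathit{ev}\circ\ffun f(g_2)|=(d_{n-1})^\ffun(g_1,g_2)$. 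Hence, exactly as in the reduction above, total boundedness of $(\ffun A,(d_{n-1})^\ffun)$ will follow once we know that the codomain $\nonexp{(F,d_\infty)}{([0,1],d_e)}$ is totally bounded.

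The latter is where the Arzel\`a-Ascoli variant does the real work, and it enters \emph{twice}. First, by the induction hypothesis $(A,d_{n-1})$ is totally bounded (Theorem~\ref{thm:modal-approx}.\ref{item:tot-bounded} at level $n-1$), so Lemma~\ref{lem:arzela-ascoli} applied with target $([0,1],d_e)$ yields that $F=\nonexp{(A,d_{n-1})}{([0,1],d_e)}$ is totally bounded. Applying Lemma~\ref{lem:arzela-ascoli} a second time, now with domain $(F,d_\infty)$ and again target $([0,1],d_e)$, gives that $\nonexp{(F,d_\infty)}{([0,1],d_e)}$ is totally bounded, as required. I expect the main obstacle to be organisational rather than deep: one must correctly identify the distance-preserving embedding $\Phi$ and recognise that a double application of Lemma~\ref{lem:arzela-ascoli} -- first passing from the base space $A$ to non-expansive functions on $A$, then to non-expansive functions on that function space -- is exactly what tames the Kantorovich lifting, while the only genuine use of the finiteness of $\At$ is to keep the $[0,1]^\At$-factor compact.
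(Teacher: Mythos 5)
Your proof is correct, and it takes a genuinely different route from the paper's. The paper also starts from Lemma~\ref{lem:arzela-ascoli}, but applies it only once (to get $F=\nonexp{(A,d_{n-1})}{([0,1],d_e)}$ totally bounded) and then invokes the \emph{density} part of the induction hypothesis (Theorem~\ref{thm:modal-approx}.\ref{item:modal-approx} at level $n-1$) to extract a finite $\frac{\epsilon}{6}$-cover of $F$ by formulas $\phi_1,\dots,\phi_m\in\modf{n-1}$; from these it builds the map $I(a)=(p_1(a),\dots,p_k(a),(\Diamond\phi_1)(a),\dots,(\Diamond\phi_m)(a))$ into the compact cube $[0,1]^{k+m}$, shows $I$ is an $\frac{\epsilon}{3}$-isometry, and pulls a finite cover of the cube back to an $\epsilon$-cover of $(A,d_n)$. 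You avoid modal formulas altogether: you factor $d_n=(d_{n-1})^\mfun\circ(\alpha\times\alpha)$ through the exactly distance-preserving maps $\alpha$ and $\Phi$, where $\Phi(g)(f)=\bigvee_{a\in A}(g(a)\land f(a))$ — whose isometry property is literally the definition of the Kantorovich lifting, and whose well-definedness is the computation behind Lemma~\ref{lem:diamond-nonexp} with $g$ in place of $R^\CA(a,\cdot)$ — and then apply Lemma~\ref{lem:arzela-ascoli} \emph{twice} to make the codomain $\nonexp{(F,d_\infty)}{([0,1],d_e)}$ totally bounded. The trade-offs: your argument is more modular, since within the simultaneous induction item~\ref{item:tot-bounded} then depends only on item~\ref{item:tot-bounded} at level $n-1$ (plus finiteness of $\At$ for the $[0,1]^\At$-factor) and not on item~\ref{item:modal-approx}, and your embedding is exact, so there is no $\epsilon$-bookkeeping; it would also transfer to other Kantorovich-style liftings. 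The paper's version instead buys a concrete finite-dimensional approximate coordinatization of $(A,d_n)$ by atoms and diamonds of finitely many formulas, in keeping with its approximation theme, at the cost of entangling the two induction items. The steps you elide — that total boundedness passes to subspaces and is reflected along distance-preserving maps (re-choosing ball centres in the image), and that finite max-metric products of totally bounded spaces are totally bounded — are routine and hold with the paper's definition of $\epsilon$-cover.
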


\begin{proof}[Proof (sketch)]
Put $F:=\nonexp{(A,d_{n-1})}{([0,1],d_e)}$ and let $\epsilon>0$.
By Lemma~\ref{lem:arzela-ascoli}, $F$ is totally bounded, so as
$\modf{n-1}$ is dense in $F$, there exists a finite
$\frac{\epsilon}{6}$-cover of $F$ consisting of formulas
$\phi_1,\dots,\phi_m \in \modf{n-1}$.
One can now show that the map
\begin{align*}
I\colon A &\to [0,1]^{k+m}\\
  a& \mapsto
(p_1(a),\dots,p_k(a),(\Diamond\phi_1)(a),\dots,(\Diamond\phi_m)(a))
\end{align*}
is an \emph{$\frac{\epsilon}{3}$-isometry}, i.e. for all $a,b \in A$,
\begin{equation*}
|d_n(a,b) - \supnorm{I(a) - I(b)}| \le \tfrac{\epsilon}{3}.
\end{equation*}
Using pre-images under $I$ and a simple triangle inequality argument,
we can then convert a finite $\frac{\epsilon}{3}$-cover of the compact
space $([0,1]^{k+m},d_\infty)$ into a finite $\epsilon$-cover of
$(A,d_n)$.
\end{proof}

\noindent We next prove a variant of the lattice version of the
Stone-Weierstra\ss{} theorem (e.g.~\cite[Lemma
A.7.2]{a:real-analysis-probability}). Again, we only assume the space
to be totally bounded instead of compact but require functions to be
non-expansive rather than only continuous. (A Stone-Weierstra\ss{}
argument appears also in a probabilistic Hennessy-Milner
result~\cite{BreugelWorrell05}).
\begin{lem}
  \label{lem:stone-weierstrass}
  Let $(X,d)$ be a totally bounded pseudometric space, and let $L$ be
  a subset of $F := \nonexp{(X,d)}{([0,1],d_e)}$ such that
  $f_1,f_2 \in L$ implies $\min(f_1,f_2),\max(f_1,f_2) \in L$. If each
  $f\in F$ can be approximated at each pair of points by functions in
  $L$, then $L$ is dense in $F$.
\end{lem}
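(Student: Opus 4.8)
The plan is to mimic the classical lattice version of the Stone--Weierstra\ss{} theorem, but to replace the compactness step --- where one extracts a finite subcover from an open cover --- by a direct appeal to total boundedness, exploiting that every function in $F$ is non-expansive and hence $1$-Lipschitz. Concretely, fix $f \in F$ and $\epsilon > 0$; the goal is to produce $g \in L$ with $\supnorm{f - g} \le \epsilon$. First I would pick a finite $\tfrac{\epsilon}{4}$-net $a_1,\dots,a_N \in X$, which exists by total boundedness, so that every $x \in X$ lies within distance $\tfrac{\epsilon}{4}$ of some $a_j$. The hypothesis on $L$ supplies, for each ordered pair $(i,j)$, a function $g_{ij} \in L$ with $|g_{ij}(a_i) - f(a_i)| \le \tfrac{\epsilon}{4}$ and $|g_{ij}(a_j) - f(a_j)| \le \tfrac{\epsilon}{4}$.

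The combinatorial core is then the usual two-stage lattice construction, now carried out only over the finite net. For fixed $i$, put $h_i := \min_j g_{ij} \in L$ (a finite meet, hence in $L$ by closure under $\min$). Since each $g_{ij}(a_i) \ge f(a_i) - \tfrac{\epsilon}{4}$, we get $h_i(a_i) \ge f(a_i) - \tfrac{\epsilon}{4}$, while $h_i(a_j) \le g_{ij}(a_j) \le f(a_j) + \tfrac{\epsilon}{4}$ for every $j$. Setting $g := \max_i h_i \in L$, the lower bound on $h_j$ at $a_j$ gives $g(a_j) \ge f(a_j) - \tfrac{\epsilon}{4}$, and the uniform upper bound on all the $h_i$ at $a_j$ gives $g(a_j) \le f(a_j) + \tfrac{\epsilon}{4}$; hence $|g(a_j) - f(a_j)| \le \tfrac{\epsilon}{4}$ at every net point.

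Finally I would transfer this control from the net to all of $X$ using non-expansivity: since finite meets and joins of non-expansive maps are again non-expansive, $g \in F$, and for arbitrary $x \in X$, choosing $a_j$ with $d(x,a_j) \le \tfrac{\epsilon}{4}$ yields $|g(x) - f(x)| \le |g(x) - g(a_j)| + |g(a_j) - f(a_j)| + |f(a_j) - f(x)| \le \tfrac{\epsilon}{4} + \tfrac{\epsilon}{4} + \tfrac{\epsilon}{4} < \epsilon$, so $\supnorm{f - g} \le \epsilon$ and $L$ is dense in $F$. The main obstacle --- and the point where the weaker total-boundedness hypothesis has to earn its keep --- is exactly this last transfer: without compactness the superlevel/sublevel sets $\{x : g(x) < f(x) + \epsilon\}$ need not admit a finite subcover, so the classical argument does not apply verbatim; what rescues it is that the $1$-Lipschitz property lets a \emph{single} $\tfrac{\epsilon}{4}$-net simultaneously control all the functions $g_{ij}, h_i, g$ off the net, converting pointwise agreement on the net into uniform agreement. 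One should take care that $\min$ and $\max$ range over finite index sets only (so that $L$-membership is preserved) and that the pair-approximation constant and the net mesh are chosen commensurately, as above.
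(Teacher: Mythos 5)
Your proof is correct and follows essentially the same route as the paper's: a finite $\tfrac{\epsilon}{4}$-net from total boundedness, pair-approximants $g_{ij}\in L$, the lattice combination $\max_i\min_j g_{ij}\in L$, and non-expansivity to transfer control from the net to all of $X$. The only (immaterial) difference is bookkeeping: you first establish $|g-f|\le\tfrac{\epsilon}{4}$ at the net points and then apply a single triangle inequality using non-expansivity of both $g$ and $f$, whereas the paper estimates directly at an arbitrary $x$ via non-expansivity of the individual $f_{ik}$ and of $f$, taking the pair-approximation tolerance to be $\tfrac{\epsilon}{2}$ rather than your $\tfrac{\epsilon}{4}$.
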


\begin{lem}\label{lem:modal-approx}
  The modal formulas of rank at most $n$ form a dense subset of
  the space $\nonexp{(A,d_n)}{([0,1],d_e)}$.
\end{lem}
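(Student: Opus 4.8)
The plan is to apply the Stone-Weierstra\ss{} variant of Lemma~\ref{lem:stone-weierstrass} to the set $L$ of (evaluation functions of) modal formulas of rank at most $n$, regarded as a subset of $F := \nonexp{(A,d_n)}{([0,1],d_e)}$. That $L \subseteq F$ is exactly bisimulation invariance: by Lemma~\ref{lem:bisim-inv-modal} every rank-$n$ formula $\phi$ satisfies $|\phi(a)-\phi(b)| \le d^G_n(a,b) = d_n(a,b)$, so $\phi$ is non-expansive. The space $(A,d_n)$ is totally bounded by Lemma~\ref{lem:tot-bounded}, so the base-space hypothesis of Lemma~\ref{lem:stone-weierstrass} is met. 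The lattice-closure hypothesis is immediate: if $\phi,\psi$ have rank at most $n$, then so do $\phi \land \psi$ (computing the pointwise minimum) and $\phi \lor \psi = \neg(\neg\phi \land \neg\psi)$ (computing the pointwise maximum), since $\rk$ is unchanged by $\neg$ and is the maximum of the two ranks under $\land$.

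The crux is the remaining hypothesis: every $f \in F$ can be approximated at each pair of points $a,b \in A$ by formulas in $L$. Fix $f$, $a$, $b$, and $\epsilon > 0$; by symmetry assume $f(a) \le f(b)$, and write $\gamma := f(b) - f(a)$. Non-expansivity of $f$ gives $\gamma \le d_n(a,b)$. Since $d_n = d^L_n$ (Lemmas~\ref{lem:metrics-equal-LK} and~\ref{lem:metrics-equal-GK}), the logical distance $\bigvee_{\rk\psi \le n}|\psi(a)-\psi(b)|$ equals $d_n(a,b)$, so we may choose a rank-$n$ formula $\phi$ with $|\phi(a)-\phi(b)|$ within any prescribed tolerance of $d_n(a,b)$; replacing $\phi$ by $\neg\phi$ if necessary we arrange $\phi(a) \le \phi(b)$, whence $\phi(b)-\phi(a) \ge \gamma - \epsilon'$ for a small $\epsilon'$ to be fixed.

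The idea is now to clamp and shift this witness to match $f$ at $a$ and $b$, using only the rank-preserving operations of the logic. Besides $\ominus$, $\land$, and truth constants we have the derived operation $\phi \oplus c := \neg((\neg\phi)\ominus c)$, which computes $\min(\phi + c, 1)$. Choosing rationals $c_1 \approx \phi(a)$, $c_2 \approx \gamma$, $c_3 \approx f(a)$ in $\Rat \cap [0,1]$, each within a small $\delta$ of its target by density of the rationals, we set
\[
  \psi := \bigl((\phi \ominus c_1) \land c_2\bigr) \oplus c_3 .
\]
Evaluating at $a$ gives $\min(\max(\phi(a)-c_1,0),c_2) + c_3 \approx 0 + f(a) = f(a)$, and at $b$ gives $\min(\phi(b)-c_1,c_2) + c_3 \approx \min(\phi(b)-\phi(a),\gamma)+f(a)$; here the inner minimum lies between $\gamma - \epsilon'$ and $\gamma$, so the value at $b$ is within $\epsilon'$ plus a constant multiple of $\delta$ of $f(a)+\gamma = f(b)$. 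Taking $\epsilon'$ and $\delta$ small enough (say each at most $\epsilon/3$) yields $|\psi(a)-f(a)| \le \epsilon$ and $|\psi(b)-f(b)| \le \epsilon$ with $\psi \in L$, as required. Lemma~\ref{lem:stone-weierstrass} then gives density of $L$ in $F$, completing the induction step for Theorem~\ref{thm:modal-approx}.\ref{item:modal-approx}.

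The main obstacle is this pairwise-approximation step: the available syntactic operations on truth values are essentially translations and clampings of slope $\pm 1$, so one cannot freely rescale a witness formula. What makes the construction go through is precisely the equality $d_n = d^L_n$ together with non-expansivity of $f$, which guarantees that the spread $|\phi(b)-\phi(a)|$ of a near-optimal witness is at least the spread $\gamma$ of $f$ up to $\epsilon'$; this is exactly what lets the clamp $\land\, c_2$ cut the witness down to the correct gap without any dilation.
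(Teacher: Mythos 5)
Your proposal is correct and follows essentially the same route as the paper's proof: both apply the totally bounded Stone--Weierstra\ss{} lemma (Lemma~\ref{lem:stone-weierstrass}) to $\modf{n}$ and realize the two-point approximation by picking a near-optimal witness for $d_n = d^L_n$ and transforming it as $((\psi\ominus u)\land v)\oplus w$ --- your derived $\oplus$ is literally the paper's $\neg(\neg(\,\cdot\,)\ominus w)$, with $u,v,w$ playing the roles of your $c_1,c_2,c_3$. The only differences are cosmetic: you normalize $f(a)\le f(b)$ and orient the witness by negating $\psi$, where the paper assumes $f(a)\ge f(b)$ and passes to $1-f$, and the paper uses one-sided rational approximations where you use two-sided ones with adjusted constants.
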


\begin{proof}[Proof (sketch)]
We proceed as in \cite{BreugelWorrell05}, applying
Lemma~\ref{lem:stone-weierstrass} to $\modf{n}$:

Given a function $f\colon\nonexp{(A,d_n)}{([0,1],d_e)}$ and points
$a,b\in A$, a formula $\phi$ approximating $f$ at $a$ and $b$ can be
constructed as follows: Let $\psi\in\modf{n}$ be such that
$|\psi(a)-\psi(b)|$ approximates $|f(a)-f(b)|$ (such a $\psi$ exists
by non-expansivity of $f$). Then $\phi$ is defined from $\psi$ by
means of modified subtraction $\ominus$, which preserves the rank of
formulas.
\end{proof}
\noindent This concludes the proof of
Theorem~\ref{thm:modal-approx}. The theorem still leaves one loose
end: The modal formulas that approximate a given depth-$n$
bisimulation-invariant state property on a model~$\CA$ might depend
on~$\CA$. We eliminate this dependency in the next section, using the
final chain construction.

\section{The Final Chain}\label{sec:final-chain}

\noindent The \emph{final chain}~\cite{Barr93,AdamekKoubek95} of the
functor $\mfun$ is a sequence of sets~$F_k$ that represent all the
possible depth-$k$ behaviours. It is constructed as follows. We take
$F_0$ to be a singleton $F_0=\{*\}$ (reflecting that all states are
equivalent at depth~$0$), and
\begin{equation*}
  F_{n+1} = \mfun F_n = [0,1]^\At\times\ffun F_n.
\end{equation*}
Given a model $\CA$, seen as a coalgebra $\alpha\colon A \to \mfun A$,
we can now define a sequence of projections $\pi_n:A \to F_n$, to be
thought of as mapping states to their depth-$n$ behaviours, by
\begin{equation*}
  \pi_0 = \bang \quad \text{and} \quad \pi_{n+1} = \mfun \pi_n \circ \alpha,
\end{equation*}
where $\bang$ denotes the unique map $A\to F_0$. Explicitly,
$\pi_{n+1}$ is thus defined by
\begin{equation}\label{eq:final-cone}
  \pi_{n+1}(a) = (\lambda p.p^\CA(a),
  \lambda y.\textstyle{\bigvee_{\pi_n(a')=y}R^\CA(a,a')}).
\end{equation}
We next build a model $\CF$ realizing all finite-depth behaviours by
taking the union $F=\bigcup_{k\in\Nat}F_k$ (automatically
disjoint). We define the model structure on $F$ by letting every
element behave as it claims to: For
$(h,g)\in F_{k+1}=[0,1]^\At\times\ffun F_k$ and $y\in F$, we put
$p^\CF(h,g)=h(p)$ for $p\in\At$ and
\begin{equation*}
  R^\CF((h,g),y)=
  g(y) \quad\text{if $y \in F_k$,}
\end{equation*}
and $R^\CF((h,g),y)=0$ otherwise. For $*\in F_0$, we just put
$p^\CF(*)=R^\CF(*,y)=0$.



In the proof of our main result (Theorem~\ref{thm:benthem-rosen}), the
following lemma will allow us to choose approximating modal formulas
uniformly across models.
\begin{lem}
  \label{lem:bisim-inv-projection}
  Let $\CA$ be a model. Then $d^G_n(a,\pi_n(a)) = 0$ for all $a\in A$.
\end{lem}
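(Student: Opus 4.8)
The plan is to unfold the definition of $d^G_n$ as an infimum (Definition~\ref{def:bisim-distance}) and prove directly that for every $\epsilon > 0$ Duplicator wins the depth-$n$ $\epsilon$-bisimulation game played between $\CA, a$ and $\CF, \pi_n(a)$; taking the infimum over $\epsilon$ then gives $d^G_n(a,\pi_n(a)) = 0$. Note that we do \emph{not} expect Duplicator to win the exact $0$-game, matching the earlier observation that distance~$0$ need not be realized by the $0$-game. The case $n = 0$ is immediate, since Duplicator wins every depth-$0$ game (Remark~\ref{rem:zero-game}).

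The core of the argument is a single invariant for Duplicator: before round $i+1$ the configuration is of the form $(a_i, \pi_{n-i}(a_i))$ for some state $a_i \in A$, with $a_0 = a$. I would show round by round that Duplicator can maintain this invariant throughout the $n$ rounds. The crucial input is the explicit transition structure of $\CF$: by \eqref{eq:final-cone} and the definition of $R^\CF$, we have $R^\CF(\pi_{n-i}(a_i), y) = \bigvee_{\pi_{n-i-1}(a'') = y} R^\CA(a_i, a'')$ for $y \in F_{n-i-1}$ and $R^\CF(\pi_{n-i}(a_i), y) = 0$ otherwise, i.e.\ the transition degrees out of $\pi_{n-i}(a_i)$ are exactly the fuzzy forward image of the $\CA$-transitions out of $a_i$ along $\pi_{n-i-1}$.

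Maintaining the invariant splits into two cases according to which model Spoiler moves in. If Spoiler moves in $\CA$ from $a_i$ to $a'$ with $R^\CA(a_i, a') > \epsilon$, Duplicator replies with $\pi_{n-i-1}(a') \in \CF$; since $R^\CF(\pi_{n-i}(a_i), \pi_{n-i-1}(a')) = \bigvee_{\pi_{n-i-1}(a'')=\pi_{n-i-1}(a')} R^\CA(a_i, a'')$ already dominates $R^\CA(a_i, a')$ (take $a'' = a'$), this is a legal reply and the new configuration $(a', \pi_{n-i-1}(a'))$ reinstates the invariant. The winning condition on atoms is then automatic: for $i < n$ we have $n - i \ge 1$, so $\pi_{n-i}(a_i) \in F_{n-i} = \mfun F_{n-i-1}$ and by \eqref{eq:final-cone} its $\CF$-value at each atom $p$ equals $p^\CA(a_i)$ exactly, giving difference $0 \le \epsilon$.

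The only delicate point, and the sole place where $\epsilon > 0$ is genuinely needed, is when Spoiler moves in $\CF$ to some $y$ with $R^\CF(\pi_{n-i}(a_i), y) > \epsilon$. Then $y$ must lie in $F_{n-i-1}$, and the transition degree is a supremum $\bigvee_{\pi_{n-i-1}(a'')=y} R^\CA(a_i, a'')$ that need not be attained. Duplicator must produce an $\CA$-successor $a'$ of $a_i$ with $\pi_{n-i-1}(a') = y$ and $R^\CA(a_i, a') \ge R^\CF(\pi_{n-i}(a_i), y) - \epsilon$; since this lower bound lies strictly below the supremum, such a witness exists, satisfies the game's transition inequality, and restores the invariant as $(a', \pi_{n-i-1}(a'))$. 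This approximation of a possibly non-attained supremum is exactly why the game is slackened by $\epsilon$, and it is the one step requiring care — everything else is routine bookkeeping.
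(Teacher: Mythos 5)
Your proof is correct and essentially the paper's own argument: the paper's full proof casts it as an induction on $n$ (after the first round, apply the inductive hypothesis to the configuration $(a',\pi_n(a'))$), which is exactly your invariant $(a_i,\pi_{n-i}(a_i))$ unrolled, and it uses \eqref{eq:final-cone} in the same two ways for the two Spoiler cases. In particular, your identification of the one delicate point --- the $\epsilon$-slack being needed precisely because the supremum defining $R^\CF(\pi_{n-i}(a_i),y)$ need not be attained when Spoiler moves in $\CF$ --- matches the paper's treatment exactly.
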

\begin{proof}[Proof (sketch)]
  Player~$D$ wins the depth-$n$ $\epsilon$-bisimulation game for
  every~$\epsilon>0$ by maintaining the invariant that in round~$i$,
  the configuration has the form $(a',\pi_{n-i}(a'))$ for some
  $a'\in A$. One sees from~\eqref{eq:final-cone} that this invariant
  implies the winning condition and can actually be maintained by~$D$.
\end{proof}
\section{Locality}\label{sec:locality}
\noindent We proceed to show that every bisimulation-invariant formula
of fuzzy FOL is \emph{local}. To this end, we introduce a notion of
Gaifman distance in fuzzy models, as well as a variant of
Ehrenfeucht-Fra\"iss\'e games. The requisite notions of Gaifman graph
and neighbourhood, as well as the definition of locality, are, maybe
unexpectedly, fairly crisp. This is technically owed to the fact that
unlike continuity, non-expansivity does not go well with chains of
$\epsilon$-estimates.

\begin{defn}
  Let $\CA$ be a fuzzy relational model.
  \begin{sparitemize}
  \item The \emph{Gaifman graph of~$\CA$} is an undirected graph with
    set~$A$ of nodes and an edge $\{a,b\}$ for every $a,b \in A$ such
    that $R(a,b) > 0$ or $R(a,b) > 0$.
  \item For every $a,b \in A$, the \emph{Gaifman distance}
    $D(a,b) \in \Nat\cup\{\infty\}$ is the minimal length (i.e.~number
    of edges) of a path between $a$ and $b$ in the Gaifman graph.
  \item For $a \in A$ and $\ell \in \mathbb{N}$, the
    \emph{neighbourhood} of $a$ with radius~$\ell$ is the set
    $\nbhood{\ell}{a}$ given by
    \[ \nbhood{\ell}{a} = \{ b \in A \mid D(a,b) \le \ell \}. \]
    For $\bar a = (a_1,\dots,a_n)$, we put
    $\nbhood{\ell}{\bar a} = \bigcup_{i\le n}\nbhood{\ell}{a_i}$.
  \end{sparitemize}
\end{defn}

\begin{defn}
  Let $\CA$ be a fuzzy relational model and $U \subseteq A$. The
  \emph{restriction} $\CA|_U$ of $\CA$ to $U$ is the fuzzy relational
  model $(U,(p^{\CA|_U})_{p\in\At},R^{\CA|_U})$ with
  $p^{\CA|_U}(a) = p^\CA(a)$ and $R^{\CA|_U}(a,b) = R^\CA(a,b)$ for
  $a,b\in U$. If $U = \nbhood{\ell}{\bar a}$ for some vector $\bar a$
  over~$A$, we also write
  $\rnbhood{\CA}{\ell}{\bar a} := \CA|_\nbhood{\ell}{\bar a}$.
\end{defn}
\noindent As indicated, the ensuing notion of locality is
on-the-nose:

\begin{defn}
  A formula $\phi$ is \emph{$\ell$-local} for $\ell \in \mathbb{N}$ if
  \begin{equation*}
    \phi_\CA(a) = \phi_\rnbhood{\CA}{\ell}{a}(a)
  \end{equation*}
  for every fuzzy relational model $\CA$ and every $a \in A$.
\end{defn}
\noindent It is easy to see that depth-$k$ behaviour depends only on
$k$-neighbourhoods, i.e.
\begin{lem}
  \label{lem:nbhood-bisim}
  For any model $\CA$, $a_0\in A$, and $k > 0$, $D$ wins the depth-$k$
  $0$-bisimulation game for $\CA,a_0$ and $\rnbhood{\CA}{k}{a_0},a_0$.
\end{lem}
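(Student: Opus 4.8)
The plan is to give Player~$D$ the obvious \emph{copycat} strategy: whenever~$S$ moves to a state $a'$ in one of the two models, $D$ replies with the very same state $a'$ in the other model, so that both pebbles always sit on a common state. I would organize this around the invariant that after $i$ rounds the configuration has the form $(c,c)$ for a single state $c\in A$ with Gaifman distance $D(a_0,c)\le i$. The base case is immediate: the initial configuration is $(a_0,a_0)$ with $D(a_0,a_0)=0$, and the winning condition holds before the first round because restriction leaves atom values untouched, so $p^\CA(a_0)=p^{\rnbhood{\CA}{k}{a_0}}(a_0)$ for every $p\in\At$.

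For the inductive step, suppose that after $i<k$ rounds the configuration is $(c,c)$ with $D(a_0,c)\le i$. If~$S$ moves in $\CA$ to a state $a'$ with $R^\CA(c,a')>0$, then a strictly positive transition value witnesses a Gaifman edge $\{c,a'\}$, whence $D(a_0,a')\le i+1\le k$; consequently $a'\in\nbhood{k}{a_0}$ is a genuine state of $\rnbhood{\CA}{k}{a_0}$ and $D$ may legally answer with $a'$ itself. As both $c$ and $a'$ lie in $\nbhood{k}{a_0}$, restriction preserves the relevant transition value, $R^{\rnbhood{\CA}{k}{a_0}}(c,a')=R^\CA(c,a')$, which in particular gives the inequality $R^{\rnbhood{\CA}{k}{a_0}}(c,a')\ge R^\CA(c,a')$ demanded in the $0$-game. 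The resulting configuration $(a',a')$ again satisfies the invariant and the winning condition. The case where~$S$ moves inside $\rnbhood{\CA}{k}{a_0}$ is entirely symmetric, since every transition of the restriction is a transition of~$\CA$ with the same value, so copying back into $\CA$ is always legal. After $k$ rounds Player~$D$ wins the depth-$k$ game by definition.

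The one point deserving genuine care, and hence what I regard as the main obstacle, is the distance bookkeeping that keeps the copied state inside the neighbourhood: each round can increase the Gaifman distance from $a_0$ by at most one, because a legal move in the $0$-game requires a strictly positive and hence edge-inducing transition value. Over the $k$ rounds of the game no pebble can therefore escape $\nbhood{k}{a_0}$, which is exactly what makes the copycat reply both well-defined and value-preserving. Everything else reduces to the two elementary facts that restriction preserves atom values and preserves transition values between retained states.
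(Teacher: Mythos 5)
Your proof is correct and takes essentially the same route as the paper's: a copycat strategy for~$D$, justified by the observation that every legal move in the $0$-game goes along a strictly positive transition and hence a Gaifman edge, so after $i\le k$ rounds the common state remains in $\nbhood{k}{a_0}$, where the restriction $\rnbhood{\CA}{k}{a_0}$ agrees with $\CA$ on atoms and transition values. Your explicit invariant $(c,c)$ with Gaifman distance at most~$i$ from $a_0$ is precisely the bookkeeping the paper states more tersely.
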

\noindent In combination with Lemma~\ref{lem:bisim-inv-modal}, we obtain
\begin{cor}
  \label{lem:modal-local}
  Every fuzzy modal formula of rank at most~$k$ is $k$-local.
\end{cor}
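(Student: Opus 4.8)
The plan is to derive the corollary directly by chaining the two lemmas named in its statement, reading $k$-locality as an instance of depth-$k$ bisimulation invariance between a model and its $k$-neighbourhood restriction. Fix a fuzzy modal formula $\phi$ with $\rk(\phi)\le k$, a model $\CA$, and a state $a\in A$; the goal is to establish the defining equation $\phi_\CA(a)=\phi_{\rnbhood{\CA}{k}{a}}(a)$ of $k$-locality.

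For the main case $k>0$, I would argue as follows. Since $a\in\nbhood{k}{a}$, the state~$a$ occurs in both $\CA$ and the restriction $\rnbhood{\CA}{k}{a}$, so the behavioural distance $d^G_k(a,a)$ between these two copies of~$a$ (one in each model) is well defined. Lemma~\ref{lem:nbhood-bisim} tells us that~$D$ wins the depth-$k$ \emph{$0$-bisimulation} game for $\CA,a$ and $\rnbhood{\CA}{k}{a},a$, and by Definition~\ref{def:bisim-distance} this yields $d^G_k(a,a)=0$ (the infimum is attained already at $\epsilon=0$). On the other hand, Lemma~\ref{lem:bisim-inv-modal} guarantees that $\phi$, having rank at most~$k$, is depth-$k$ bisimulation-invariant, so
\begin{equation*}
  |\phi_\CA(a)-\phi_{\rnbhood{\CA}{k}{a}}(a)|\le d^G_k(a,a)=0,
\end{equation*}
whence $\phi_\CA(a)=\phi_{\rnbhood{\CA}{k}{a}}(a)$ as required.

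The only loose end is the edge case $k=0$, which Lemma~\ref{lem:nbhood-bisim} explicitly excludes. Here the argument is trivial rather than game-theoretic: a formula of rank~$0$ contains neither modalities nor propositional atoms, so by the recursive definition of $\rk$ it is built solely from truth constants via $\ominus c$, $\neg$, and $\land$, and therefore evaluates to a fixed constant at every state of every model. Since $\nbhood{0}{a}=\{a\}$, both $\phi_\CA(a)$ and $\phi_{\rnbhood{\CA}{0}{a}}(a)$ equal this same constant, so $0$-locality holds. I do not anticipate a genuine obstacle in this proof; the one point requiring care is simply the bookkeeping that identifies the two occurrences of~$a$ as the pair of states compared by $d^G_k$, so that depth-$k$ bisimulation invariance applies on the nose.
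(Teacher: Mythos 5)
Your proof is correct and matches the paper's own argument, which obtains the corollary precisely by combining Lemma~\ref{lem:nbhood-bisim} (duplicator wins the depth-$k$ $0$-bisimulation game between $\CA,a$ and $\rnbhood{\CA}{k}{a},a$, so $d^G_k$ vanishes on this pair) with depth-$k$ bisimulation invariance from Lemma~\ref{lem:bisim-inv-modal}. Your explicit treatment of the $k=0$ edge case, where rank-$0$ formulas are constant since atoms count towards the rank, is a small detail the paper leaves implicit, but it does not change the approach.
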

%
\noindent To establish the desired locality result, we employ
Ehrenfeucht-Fra\"iss\'e games, introduced next. We phrase
Ehrenfeucht-Fra\"iss\'e equivalence in terms of a pseudometric, in
line with our treatment of behavioural distance, as this is the right
way of measuring equivalence w.r.t.\ fuzzy FOL; in the further
technical development, we will actually need only the case with
deviation $\epsilon=0$.
\begin{defn}
  \label{def:ef-game}
  Let $\CA,\CB$ be fuzzy relational models, and let
  $\bar a_0$ and $\bar b_0$ be vectors of equal length over $A$ and
  $B$ respectively. The \emph{$\epsilon$-Ehrenfeucht-Fra\"iss\'e game
  for $\CA,\bar a_0$ and $\CB,\bar b_0$} played by $S$
  (\emph{spoiler}) and $D$ (\emph{duplicator}) is given as follows.
  \begin{itemize}
  \item \emph{Configurations:} pairs $(\bar a,\bar b)$ of vectors
  $\bar a$ over $A$ and $\bar b$ over $B$.
  \item \emph{Initial configuration:} $(\bar a_0,\bar b_0)$.
  \item \emph{Moves:} Player $S$ needs to pick a new state in one of
    the models, say $a\in A$, and then $D$ needs to pick a state in
    the other model, say $b\in B$. The new configuration is then
    $(\bar a a,\bar b b)$.
  \item \emph{Winning condition:} Any player who needs to move but
    cannot, loses. Player $D$ additionally needs to maintain the
    condition that $(\bar a,\bar b)$ is a \emph{partial isomorphism up
      to~$\epsilon$}: For all $0\le i,j\le n$:
    \begin{itemize}
      \item $a_i=a_j\iff b_i=b_j$
      \item $|p^\CA(a_i)-p^\CB(b_i)|\le\epsilon$ for all $p\in\At$
      \item $|R^\CA(a_i,a_j)-R^\CB(b_i,b_j)|\le\epsilon$.
    \end{itemize}
  \end{itemize}
  Here, we need only the \emph{$n$-round
    $\epsilon$-Ehrenfeucht-Fra\"iss\'e game}, which as the name
  indicates is played for at most~$n$ rounds, and $D$ wins after $n$
  rounds have been played.
\end{defn}
\noindent In analogy to the classical setup, fuzzy FOL is invariant
under Ehrenfeucht-Fra\"iss\'e equivalence in the sense that formula
evaluation is non-expansive:
\begin{lem}[Ehrenfeucht-Fra\"iss\'e invariance]
  \label{lem:ef-inv-fol}
  Let $\CA,\CB$ be fuzzy relational models and $\bar a_0,\bar b_0$
  vectors of length $m$ over~$A$ and $B$, respectively.  If $D$ wins
  the $n$-round $\epsilon$-Ehrenfeucht-Fra\"iss\'e game on
  $\bar a_0,\bar b_0$, then for every first-order formula $\phi$ with
  at most $m$ free variables and $\qr(\phi)\le n$,
  \begin{equation*}
    |\phi(\bar a_0)-\phi(\bar b_0)|\le\epsilon.
  \end{equation*}  
\end{lem}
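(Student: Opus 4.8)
The plan is to prove this by induction on the number of rounds $n$, following the classical Ehrenfeucht-Fra\"iss\'e argument but tracking the $\epsilon$-deviation throughout. The key structural observation is that a fuzzy FOL formula $\phi$ of quantifier rank $n$ is built from atomic formulas (which are constrained by the partial-isomorphism-up-to-$\epsilon$ condition) using the propositional connectives and at most $n$ nested quantifiers, so I would induct on the syntactic structure of $\phi$ while simultaneously varying $n$.

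\begin{proof}[Proof (sketch)]
We argue by induction on the quantifier rank $n$, with an inner induction on the structure of $\phi$. For the base case $n = 0$, $\phi$ is quantifier-free, so it is a Boolean-style combination (via $\neg$, $\land$, $\ominus c$, and truth constants) of atomic formulas $p(x_i)$, $R(x_i,x_j)$, and $x_i = x_j$. The winning condition for $D$ directly bounds the atomic pieces: $|p^\CA(a_i) - p^\CB(b_i)| \le \epsilon$, $|R^\CA(a_i,a_j) - R^\CB(b_i,b_j)| \le \epsilon$, and the equalities $x_i = x_j$ evaluate identically on both sides since $a_i = a_j \iff b_i = b_j$, so their contribution to the deviation is $0$. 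I would then check that each propositional connective is non-expansive in the sup-norm: $\neg$ is an isometry, $\phi \ominus c$ is non-expansive since $|\max(s-c,0) - \max(t-c,0)| \le |s-t|$, and $\min$ satisfies $|\min(s_1,t_1) - \min(s_2,t_2)| \le \max(|s_1-s_2|,|t_1-t_2|)$. Hence the deviation of $\phi$ stays bounded by $\epsilon$.

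For the inductive step at quantifier rank $n$, the only new case beyond the non-expansive propositional connectives is a formula of the form $\exists x_0.\,\psi(x_0,\bar x)$ where $\qr(\psi) \le n-1$. Here $\phi(\bar a_0) = \bigvee_{a_0 \in A} \psi(a_0, \bar a_0)$ and similarly for $\bar b_0$. Given a spoiler move, I would use the game hypothesis: for any choice $a_0 \in A$ that $S$ could use to witness the supremum on the $\CA$-side, $D$ has a reply $b_0 \in B$ so that $D$ wins the remaining $(n-1)$-round game on $(\bar a_0 a_0, \bar b_0 b_0)$; by the induction hypothesis $|\psi(a_0,\bar a_0) - \psi(b_0,\bar b_0)| \le \epsilon$. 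Symmetrically, any $\CB$-side witness can be matched. A standard supremum-comparison argument then gives $|\phi(\bar a_0) - \phi(\bar b_0)| \le \epsilon$.

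The main subtlety, as opposed to the crisp case, is precisely this supremum-comparison step, since the suprema need not be attained (models are not assumed witnessed). The cleanest way I would handle this is to show both inequalities $\phi(\bar a_0) \le \phi(\bar b_0) + \epsilon$ and the symmetric one. For the first, fix $\eta > 0$ and pick $a_0$ with $\psi(a_0,\bar a_0) \ge \phi(\bar a_0) - \eta$; treating this as a spoiler move on the $\CA$-side, $D$'s winning reply $b_0$ yields $\psi(b_0,\bar b_0) \ge \psi(a_0,\bar a_0) - \epsilon \ge \phi(\bar a_0) - \eta - \epsilon$ by the induction hypothesis, whence $\phi(\bar b_0) \ge \phi(\bar a_0) - \eta - \epsilon$; letting $\eta \to 0$ closes the gap. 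The symmetric inequality is identical with the roles of $\CA$ and $\CB$ swapped, which is legitimate because $S$ may move in either model. Combining the two bounds gives $|\phi(\bar a_0) - \phi(\bar b_0)| \le \epsilon$, completing the induction.
\end{proof}
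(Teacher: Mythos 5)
Your proposal is correct and follows essentially the same route as the paper's proof: atomic cases from the winning condition, non-expansivity of the propositional connectives, and an approximate-witness argument (your $\eta$ is the paper's $\delta$) combined with duplicator's reply and symmetry for the existential case. Your framing as an outer induction on $n$ with an inner structural induction is just a bookkeeping variant of the paper's single induction on $\phi$, in which the round count decreases implicitly with each quantifier.
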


\begin{lem}
  \label{lem:bisim-inv-local}
  Let $\phi$ be a bisimulation-invariant formula of fuzzy FOL with
  quantifier rank~$\qr(\phi)\le n$. Then $\phi$ is $k$-local for
  $k = 3^n$.
\end{lem}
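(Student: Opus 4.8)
The plan is to derive locality from Ehrenfeucht--Fra\"iss\'e invariance (Lemma~\ref{lem:ef-inv-fol}), using bisimulation invariance of~$\phi$ to replace the given model by one in which every local neighbourhood type recurs arbitrarily often. Fix a model~$\CA$ and a state $a\in A$, write $k=3^n$, and abbreviate the neighbourhood submodel by $\CA^-:=\rnbhood{\CA}{k}{a}$; we must show $\phi_\CA(a)=\phi_{\CA^-}(a)$. Let $\mathcal{N}$ be the disjoint union of countably many copies of~$\CA$ together with countably many copies of~$\CA^-$, and let $a_\ast$ be the copy of~$a$ in one distinguished $\CA$-copy and $b_\ast$ the copy of~$a$ in one distinguished $\CA^-$-copy. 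Since every coproduct injection is a bounded morphism, Lemma~\ref{lem:mor-zero-dist} gives $d^G(a,a_\ast)=0$ and $d^G(a,b_\ast)=0$ (this is Lemma~\ref{lem:bisim-inv-disjoint-union} for a countable disjoint union), so bisimulation invariance of~$\phi$ yields $\phi_\CA(a)=\phi_\mathcal{N}(a_\ast)$ and $\phi_{\CA^-}(a)=\phi_\mathcal{N}(b_\ast)$. It therefore suffices to prove the single-model equality $\phi_\mathcal{N}(a_\ast)=\phi_\mathcal{N}(b_\ast)$.

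For this I invoke Lemma~\ref{lem:ef-inv-fol} with $\epsilon=0$ and $m=1$: it is enough that duplicator wins the $n$-round $0$-Ehrenfeucht--Fra\"iss\'e game on $(\mathcal{N},a_\ast)$ and $(\mathcal{N},b_\ast)$. By construction the $k$-neighbourhoods $\nbhood{k}{a_\ast}$ and $\nbhood{k}{b_\ast}$ in $\mathcal{N}$ are both isomorphic to~$\CA^-$ (for $b_\ast$ because the $k$-neighbourhood of~$a$ inside $\CA^-=\rnbhood{\CA}{k}{a}$ is already all of~$\CA^-$), so there is an isomorphism $\nbhood{3^n}{a_\ast}\cong\nbhood{3^n}{b_\ast}$ sending $a_\ast\mapsto b_\ast$. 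Duplicator maintains, after round~$i$, an isomorphism $g_i$ between the $3^{\,n-i}$-neighbourhoods of the two current tuples, matching them pointwise. This is the standard Gaifman factor-$3$ invariant: when spoiler plays a point~$c$ within distance $2\cdot 3^{\,n-i-1}$ of the current tuple, its $3^{\,n-i-1}$-neighbourhood lies inside the domain of~$g_i$ and duplicator replies with $g_i(c)$. Since $3^{\,n-n}=1$, the invariant after the final round forces an isomorphism of $1$-neighbourhoods of the tuples, which is exactly an \emph{exact partial isomorphism} and hence secures the $0$-winning condition.

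The only non-routine case is spoiler playing a \emph{far} point~$c$, at distance $>2\cdot 3^{\,n-i-1}$ from the current tuple, so that $\nbhood{3^{\,n-i-1}}{c}$ is disjoint from the tuple's neighbourhood. Here bisimulation invariance has already done its work: because $\mathcal{N}$ contains \emph{infinitely many} disjoint copies of each of~$\CA$ and~$\CA^-$, the neighbourhood $\nbhood{3^{\,n-i-1}}{c}$, which lives inside a single copy of one ingredient, is realized by the corresponding point in a \emph{fresh} copy of that same ingredient not yet met by either tuple. Duplicator answers with that point; its neighbourhood is isomorphic to $\nbhood{3^{\,n-i-1}}{c}$ and disjoint from the opposite tuple, so gluing this isomorphism to the restriction of~$g_i$ extends the invariant. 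Only finitely many copies are touched during the $n$ rounds, so a fresh copy of the required kind always exists (finitely many copies of each ingredient would in fact suffice). Applying Lemma~\ref{lem:ef-inv-fol} then gives $\phi_\mathcal{N}(a_\ast)=\phi_\mathcal{N}(b_\ast)$, whence $\phi_\CA(a)=\phi_{\CA^-}(a)$, i.e.\ $\phi$ is $3^n$-local.

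I expect the far-point case to be the main obstacle. In the fuzzy setting the real-valued edge weights yield a potential continuum of local neighbourhood types, so the usual Hanf-style counting (which needs finitely many types, each present above a fixed threshold) is not available. The device that rescues the argument is precisely the reduction, \emph{via bisimulation invariance}, to the single model~$\mathcal{N}$ assembled from many identical copies of~$\CA$ and~$\CA^-$: in $\mathcal{N}$ every realized type automatically recurs infinitely often, so duplicator always finds a fresh far witness. This is exactly the step that would fail for a non-bisimulation-invariant formula such as $\exists y.\,p(y)$, where passing from~$\CA^-$ to~$\mathcal{N}$ changes the value of the global supremum.
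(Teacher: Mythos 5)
Your proposal is correct and takes essentially the same route as the paper's proof: reduce, via bisimulation invariance under disjoint unions (bounded morphisms), to a $0$-Ehrenfeucht--Fra\"iss\'e game won by maintaining an isomorphism between the radius-$3^{n-i}$ neighbourhoods of the current tuples, splitting spoiler's moves into near ones (distance $\le 2\cdot 3^{n-i-1}$, answered through the isomorphism) and far ones (answered in a fresh copy). The only cosmetic difference is that you build one ambient model with countably many copies of $\CA$ and $\rnbhood{\CA}{k}{a}$ and compare two distinguished points inside it, whereas the paper extends $\CA$ and $\rnbhood{\CA}{k}{a_0}$ separately by $n$ copies of each and compares the two resulting models at $a_0$ --- and, as you note yourself, finitely many copies suffice since at most one fresh copy is consumed per round.
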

\begin{proof}[Proof (sketch).]
  Let $\CA$ be a model, $a_0\in A$.
  Define models $\CB$ and $\CC$ by
  extending both $\CA$ and $\rnbhood{\CA}{k}{a_0}$ by $n$ disjoint
  copies of both $\CA$ and $\rnbhood{\CA}{k}{a_0}$ each.  By
  Lemmas~\ref{lem:bisim-inv-disjoint-union} and~\ref{lem:ef-inv-fol},
  it suffices to show that $D$ wins
  the $0$-Ehrenfeucht-Fra\"iss\'e game for $\CB,a_0$ and
  $\CC,a_0$. Indeed, $D$ wins by maintaining the following invariant,
  where we put $k_i = 3^{n-i}$ for $0\le i\le n$:
  \begin{quote}
    If $(\bar b,\bar c) = ((b_0,\dots,b_i),(c_0,\dots,c_i))$ is the
    current configuration, then there is an isomorphism between
    $\rnbhood{\CB}{k_i}{\bar b}$ and $\rnbhood{\CC}{k_i}{\bar c}$
    mapping each $b_j$ to $c_j$. \qedhere
  \end{quote}
\end{proof}

\section{A Fuzzy van Benthem Theorem}\label{sec:main}

\noindent It remains only to establish the implication from locality
and bisimulation-invariance to finite-depth bisimulation invariance,
using a standard unravelling construction, to finish the proof of our
main result. 

\begin{defn}
  The \emph{unravelling} $\CA^\ast$ of a model~$\CA$ is the model with
  set $A^+$ (non-empty lists over $A$) of states and
  \begin{gather*}
    p^{\CA^\ast}(\wbar{a}) = p^\CA(\pi(\wbar{a})), \\
    R^{\CA^\ast}(\wbar{a},\wbar{a}a) = R^\CA(\pi(\wbar{a}),a),
  \end{gather*}
  for $\wbar{a}\in A^+, a\in A$, where $\pi\colon A^+ \to A$ projects
  to the last element and all other values of $R^{\CA^\ast}$ are~$0$.
\end{defn}

\begin{lem}
  \label{lem:unravelling-bisim}
  For any model $\CA$ and $a_0\in A$, $D$ wins the $0$-bisimulation game
  for $\CA,a_0$ and $\CA^\ast,a_0$.
\end{lem}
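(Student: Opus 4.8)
The plan is to exhibit a winning strategy for $D$ in the $0$-bisimulation game on $\CA,a_0$ and $\CA^\ast,a_0$. The natural idea is that the unravelling $\CA^\ast$ behaves exactly like $\CA$ along every path, with the projection map $\pi\colon A^+\to A$ recovering the underlying state. Concretely, I would have $D$ maintain the invariant that whenever the configuration is $(a,\bar a)\in A\times A^+$, we have $a=\pi(\bar a)$, i.e.\ the current $\CA^\ast$-state is some list whose last element is the current $\CA$-state. The initial configuration $(a_0,a_0)$ (reading the second component as the singleton list) satisfies this since $\pi(a_0)=a_0$.

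First I would check that this invariant suffices to guarantee $D$'s winning condition at each round: since we are playing the $0$-game, $D$ must ensure $|p^\CA(a)-p^{\CA^\ast}(\bar a)|=0$ for every atom $p$. But by definition $p^{\CA^\ast}(\bar a)=p^\CA(\pi(\bar a))=p^\CA(a)$ under the invariant, so the condition holds with slack $0$. Next I would verify that $D$ can always restore the invariant after any spoiler move. There are two cases depending on which model $S$ moves in.

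If $S$ moves in $\CA^\ast$, picking $\bar a'$ with $R^{\CA^\ast}(\bar a,\bar a')>0$, then by the definition of $R^{\CA^\ast}$ the only edges out of $\bar a$ with positive weight go to lists of the form $\bar a a$ with $R^{\CA^\ast}(\bar a,\bar a a)=R^\CA(\pi(\bar a),a)=R^\CA(a,a)$; so $\bar a'=\bar a a$ for some $a$ with $R^\CA(a,a)>0$. Then $D$ replies in $\CA$ with that same state $a$, which is a legal reply since $R^\CA(a,a)\ge R^\CA(a,a)-0$, and the new configuration $(a,\bar a a)$ again satisfies $\pi(\bar a a)=a$. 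Conversely, if $S$ moves in $\CA$ from $a$ to $a'$ with $R^\CA(a,a')>0$, then $D$ replies in $\CA^\ast$ with the list $\bar a a'$; this is legal because $R^{\CA^\ast}(\bar a,\bar a a')=R^\CA(\pi(\bar a),a')=R^\CA(a,a')\ge R^\CA(a,a')-0$, and again $\pi(\bar a a')=a'$, restoring the invariant.

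Since $D$ can maintain the invariant indefinitely and the invariant implies the winning condition before every round, $D$ never gets stuck and wins every play (in particular infinite plays, which $D$ wins by definition). I do not expect any real obstacle here: the argument is essentially a bookkeeping verification that $\pi$ is a bounded morphism in disguise, so that the weights and atom values transfer exactly. The only point requiring mild care is the handling of the reflexive edge $R^\CA(a,a)$ in $\CA^\ast$, and the observation that $0$-slack replies must match weights on the nose; both are immediate from the definition of $R^{\CA^\ast}$.
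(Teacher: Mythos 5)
Your proposal is correct and takes essentially the same approach as the paper: maintain the invariant that configurations have the form $(a,\bar a)$ with $\pi(\bar a) = a$, note that the definitions of $p^{\CA^\ast}$ and $R^{\CA^\ast}$ make atom values and edge weights coincide exactly (so $0$-slack replies are available), and have $D$ answer along $\pi$ in both directions. One small caveat on the writeup: in the case where $S$ moves in $\CA^\ast$, you reuse the letter $a$ both for the current state and for the newly appended list element, so your expression $R^\CA(a,a)$ misleadingly suggests a reflexive edge --- with fresh notation it is $R^\CA(a,a')$ for a new state $a'$, and no self-loop plays any role.
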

\noindent The following lemma then completes the last step in our
program as laid out in Section~\ref{sec:bisimulation-games}.
\begin{lem}
  \label{lem:local-k-bisim-inv}
  Let $\phi$ be bisimulation-invariant and $k$-local.  Then~$\phi$ is
  depth-$(k+1)$ bisimulation-invariant.
\end{lem}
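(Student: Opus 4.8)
The plan is to move to tree-shaped models by unravelling, use $k$-locality there to cut everything down to the depth-$k$ subtree around the root, and then argue that depth-$(k+1)$ behavioural closeness of the original states already makes these two subtrees close in the \emph{unbounded} behavioural distance, so that one last application of bisimulation-invariance finishes the job.

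Fix models $\CA,\CB$ and states $a\in A$, $b\in B$; the goal is $|\phi(a)-\phi(b)|\le d^G_{k+1}(a,b)$. First I would replace $\CA,a$ and $\CB,b$ by the unravellings $\CA^\ast,\CB^\ast$ and their length-one lists $[a],[b]$. By Lemma~\ref{lem:unravelling-bisim} we have $d^G(a,[a])=0=d^G(b,[b])$, so the assumed bisimulation-invariance of $\phi$ yields $\phi_\CA(a)=\phi_{\CA^\ast}([a])$ and $\phi_\CB(b)=\phi_{\CB^\ast}([b])$. Next, $k$-locality lets me pass to the restrictions to the $k$-neighbourhoods of the roots, $\phi_{\CA^\ast}([a])=\phi_{\rnbhood{\CA^\ast}{k}{[a]}}([a])$ and likewise for $b$. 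Since in an unravelling the only transitions of positive weight go from a list to a one-step extension, the Gaifman neighbourhood $\nbhood{k}{[a]}$ is precisely the downward subtree of depth~$k$ rooted at~$[a]$.

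The heart of the proof is to bound the unbounded behavioural distance $d^G([a],[b])$ \emph{computed in the two restrictions} by $d^G_{k+1}(a,b)$. Since $d^G_{k+1}(a,[a])=d^G_{k+1}(b,[b])=0$ (by Lemmas~\ref{lem:unravelling-bisim} and~\ref{lem:metrics}), the depth-$n$ composition Lemma~\ref{lem:game-transitive} gives $d^G_{k+1}([a],[b])\le d^G_{k+1}(a,b)$ on the unravellings, so it suffices to convert a depth-$(k+1)$ winning strategy for~$D$ on $\CA^\ast,\CB^\ast$ into an unbounded winning strategy for~$D$ on the restrictions. Here the tree shape is decisive: from the roots every round pushes both pebbles exactly one level deeper, so after $j$ rounds both sit at depth~$j$; as long as $j\le k$ the pebbles stay inside the $k$-neighbourhoods, where the transition weights agree with those of the full unravelling, so that every move available to~$S$ in a restriction is also available in the full unravelling and the responses prescribed by the depth-$(k+1)$ strategy are legal moves of the restriction. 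When both pebbles reach depth~$k$ they have no outgoing edges inside the restriction, so~$S$ is stuck and the game halts.

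The delicate point -- and the reason the conclusion is depth-$(k+1)$ and not depth-$k$ -- is the accounting of when the propositional winning condition is tested. On the restrictions the play completes at most $k$ rounds, but round $k+1$ still \emph{begins} at depth~$k$, so before~$S$ gets stuck~$D$ must already have maintained $|p(\cdot)-p(\cdot)|\le\epsilon$ at every depth $0,1,\dots,k$; these are exactly the checks a depth-$(k+1)$ strategy guarantees (it tests before rounds $1,\dots,k+1$, i.e.\ at depths $0,\dots,k$), whereas a depth-$k$ strategy would leave the fringe at depth~$k$ uncontrolled. Hence the depth-$(k+1)$ strategy wins the restriction game, giving $d^G([a],[b])\le d^G_{k+1}(a,b)$ in the restrictions. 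Applying bisimulation-invariance of~$\phi$ once more to the restrictions and chaining the equalities from the first two steps gives $|\phi(a)-\phi(b)|=|\phi_{\rnbhood{\CA^\ast}{k}{[a]}}([a])-\phi_{\rnbhood{\CB^\ast}{k}{[b]}}([b])|\le d^G_{k+1}(a,b)$. I expect this finite-depth-to-unbounded transfer on the restrictions -- with its lockstep descent in the tree and the exact round-counting for the propositional condition -- to be the main obstacle.
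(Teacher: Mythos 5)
Your proof is correct and follows essentially the same route as the paper's: unravel, use $k$-locality to cut down to the depth-$k$ subtrees around the roots, exploit the tree shape (lockstep descent, so that after $k$ rounds both pebbles sit at leaves and $S$ is stuck in round $k+1$, which is exactly why depth-$(k+1)$ rather than depth-$k$ invariance is obtained) to convert a depth-$(k+1)$ winning strategy into one for the unbounded game, and conclude by bisimulation invariance of $\phi$. The only cosmetic difference is that the paper composes games via Lemmas~\ref{lem:game-transitive}, \ref{lem:nbhood-bisim} and~\ref{lem:unravelling-bisim} to land directly on a depth-$(k+1)$ game between the two restrictions, whereas you compose on the full unravellings and then transfer the strategy to the restrictions by an explicit copying argument.
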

\begin{proof}[Proof (sketch)]
  Use locality and unravelling (Lemma~\ref{lem:unravelling-bisim}) to
  reduce to tree models of depth~$k$, and then exploit that in such
  models, winning the depth-$(k+1)$ $\epsilon$-bisimulation game
  entails winning the unrestriced game.
\end{proof}
\noindent We finally state our main result:
\begin{thm}[Fuzzy van Benthem theorem]
  \label{thm:benthem-rosen}
  Let~$\phi$ be a formula of fuzzy FOL with one free variable and
  $\qr(\phi)=n$. If~$\phi$ is bisimulation-invariant, then $\phi$ can
  by approximated by fuzzy modal formulas of rank at most $3^n+1$,
  uniformly over all models; that is: For every $\epsilon>0$ there
  exists a fuzzy modal formula $\phi_\epsilon$ such that for every
  fuzzy relational model~$\CA$ and every $a\in\CA$,
  $|\phi(a)-\phi_\epsilon(a)|\le\epsilon$.
\end{thm}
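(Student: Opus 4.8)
The plan is to assemble the three-step program from Section~\ref{sec:bisimulation-games}, using the final-chain construction of Section~\ref{sec:final-chain} to turn the resulting modal approximant into one that is uniform over all models. Since only the finitely many atoms occurring in $\phi$ are relevant, Assumption~\ref{ass:fin-atoms} is in force throughout. Write $N := 3^n + 1$. First I would run $\phi$ through the locality result: as $\phi$ is bisimulation-invariant with $\qr(\phi) = n$, Lemma~\ref{lem:bisim-inv-local} shows that $\phi$ is $k$-local for $k = 3^n$. Then Lemma~\ref{lem:local-k-bisim-inv} upgrades locality to finite-depth invariance, yielding that $\phi$ is depth-$N$ bisimulation-invariant. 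Consequently, on every model the state property $a \mapsto \phi(a)$ is non-expansive w.r.t.\ $d^G_N$, so Theorem~\ref{thm:modal-approx}.\ref{item:modal-approx} already produces, on any \emph{fixed} model, a modal approximant of rank at most $N$. The one remaining difficulty --- and the main obstacle --- is that this approximant a priori depends on the chosen model.

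To obtain a single formula $\phi_\epsilon$ that works uniformly, I would apply the modal approximation theorem not to an arbitrary $\CA$ but to the universal final-chain model $\CF$. Its evaluation $\phi_\CF \colon F \to [0,1]$ is non-expansive w.r.t.\ $d_N$: depth-$N$ bisimulation invariance gives $|\phi_\CF(y_1) - \phi_\CF(y_2)| \le d^G_N(y_1, y_2)$ for $y_1, y_2 \in F$, and $d^G_N = d_N$ on $\CF$ by Theorem~\ref{thm:modal-approx}.\ref{item:metrics-equal}, so $\phi_\CF \colon \nonexp{(F, d_N)}{([0,1], d_e)}$. Theorem~\ref{thm:modal-approx}.\ref{item:modal-approx} applied to $\CF$ then yields, for the given $\epsilon$, a fuzzy modal formula $\phi_\epsilon$ of rank at most $N$ with $\supnorm{\phi_\CF - \phi_\epsilon} \le \epsilon$; that is, $|\phi(y) - \phi_\epsilon(y)| \le \epsilon$ for every $y \in F$.

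Finally I would transfer this uniform estimate to an arbitrary model $\CA$ and state $a \in A$, inserting $\pi_N(a) \in F_N \subseteq F$ as an intermediate point. The triangle inequality gives
\[
  |\phi(a)-\phi_\epsilon(a)| \le |\phi(a)-\phi(\pi_N(a))| + |\phi(\pi_N(a))-\phi_\epsilon(\pi_N(a))| + |\phi_\epsilon(\pi_N(a))-\phi_\epsilon(a)|.
\]
The middle term is at most $\epsilon$ by the previous paragraph. The first term is bounded by $d^G_N(a, \pi_N(a))$, since $\phi$ is depth-$N$ bisimulation-invariant, and the third by $d^G_N(\pi_N(a), a)$, since $\phi_\epsilon$ is modal of rank at most $N$ and hence depth-$N$ bisimulation-invariant by Lemma~\ref{lem:bisim-inv-modal}; both distances vanish by Lemma~\ref{lem:bisim-inv-projection}. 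Thus $|\phi(a) - \phi_\epsilon(a)| \le \epsilon$ for all $\CA$ and $a$, which is the claim. The genuinely delicate point is the passage to $\CF$ in the second paragraph: it is the universality of the final chain --- every finite-depth behaviour is realized, so every state of every model sits at depth-$N$ distance $0$ from its image $\pi_N(a)$ (Lemma~\ref{lem:bisim-inv-projection}) --- that lets a single formula extracted on $\CF$ serve for all models at once, removing the model-dependence left open by Theorem~\ref{thm:modal-approx}.
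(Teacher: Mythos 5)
Your proposal is correct and follows essentially the same route as the paper's own proof: locality via Lemma~\ref{lem:bisim-inv-local}, depth-$(3^n+1)$ invariance via Lemma~\ref{lem:local-k-bisim-inv}, modal approximation on the final-chain model $\CF$ via Theorem~\ref{thm:modal-approx}, and transfer to arbitrary models via Lemma~\ref{lem:bisim-inv-projection}. You merely make explicit two points the paper leaves implicit, namely the verification that $\phi_\CF$ is non-expansive w.r.t.\ $d_N$ (using Theorem~\ref{thm:modal-approx}.\ref{item:metrics-equal}) and the triangle-inequality step through $\pi_N(a)$ with Lemma~\ref{lem:bisim-inv-modal} applied to $\phi_\epsilon$.
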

\begin{proof}[Proof (sketch)]
  By Lemmas~\ref{lem:bisim-inv-local} and~\ref{lem:local-k-bisim-inv},
  $\phi$ is depth-$k$ bisimulation-invariant for $k = 3^n + 1$.  By
  Theorem~\ref{thm:modal-approx}, $\phi$ can be modally approximated
  in rank~$k$ on the model~$\CF$ constructed from the final chain in
  Section~\ref{sec:final-chain}. The claim then follows by
  Lemma~\ref{lem:bisim-inv-projection}.
\end{proof}
\begin{rem}
  \label{rem:partial-unravelling}
  We leave the Rosen version of the characterization theorem, i.e.\
  whether Theorem~\ref{thm:benthem-rosen} remains true over finite
  models, as an open problem. As in the classical case, the
  unravelling construction is easily made to preserve finite models by
  using \emph{partial unravelling} up to the locality
  distance. However, the model construction from the final chain in
  Section~\ref{sec:final-chain} and in fact already the stages of the
  final chain are infinite, so cannot be used in this version. We thus
  do obtain a local version of the Rosen theorem, stating that on a
  fixed finite model, every first-order formula that is
  bisimulation-invariant over finite models can be approximated by
  modal formulas. However, it is unclear whether the approximation
  then works uniformly over models, as in
  Theorem~\ref{thm:benthem-rosen}.
\end{rem}

\section{Conclusions}

We have established a fuzzy analogue of the classical van Benthem
theorem: Every fuzzy first-order formula that is
bisimulation-invariant in the sense that its evaluation map is
non-expansive w.r.t.\ a natural notion of behavioural distance can be
approximated by fuzzy modal formulas. To our knowledge this is the
first modal characterization result of this type for any multi-valued
modal logic. We do point out that we leave a nagging open problem: We
currently do not know whether the result can be sharpened to claim
that every bisimulation-invariant fuzzy first-order formula is in fact
\emph{equivalent} to a fuzzy modal formula. This contrasts with the
actual technical core of our argument: The key step in our proof is to
show that \emph{every} state property that is non-expansive w.r.t.\
\emph{depth-$n$} behavioural distance can be approximated, uniformly
across models, by fuzzy modal formulas of rank~$n$, a result that
certainly cannot be improved to on-the-nose modal expressibility.

Further issues for future research include the question whether our
main result has a Rosen variant, i.e.\ holds also over finite models,
and coverage of other semantics of the propositional operators, in
particular \L{}ukasiewicz logic. We also aim to extend the modal
characterization theorem to further multi-valued logics, such as
$[0,1]$-valued probabilistic modal logics~\cite{BreugelWorrell05},
ideally at a coalgebraic level of generality.

\bibliographystyle{myabbrv}
\bibliography{coalgml}

\newpage
\appendix

\section{Omitted Proofs}

\subsubsection*{Proof of Lemma~\ref{lem:diamond-nonexp}}
Let $\supnorm{f-g}\le\epsilon$. We need to show
$\supnorm{\Diamond f-\Diamond g}\le\epsilon$, so let $a\in A$ and we
need to show $|(\Diamond f)(a) - (\Diamond g)(a)|\le\epsilon$.

Let $a'\in A$. Then $|f(a')-g(a')|\le\epsilon$ by assumption. Thus,
also $|R^\CA(a,a')\land f(a') - R^\CA(a,a') \land
g(a')|\le\epsilon$.
Now we may take the supremum over all $a'\in A$ and obtain
$|(\Diamond f)(a) - (\Diamond g)(a)|\le\epsilon$, as desired. \qed

\subsubsection*{Proof of Lemma~\ref{lem:game-transitive}}
A winning strategy for~$D$ consists in following the two existing
winning strategies in parallel. In detail,~$D$ maintains the following
invariant on configurations $(a,c)$: There exists a state $b$ such
that $(a,b)$ and $(b,c)$ are winning positions for~$D$ in the
$\epsilon$-game on $\CA,\CB$ and in the $\delta$-game on
$\CB,\mathcal{C}$, respectively.  By the triangle inequality, $(a,c)$
then satisfies the winning condition. It remains to show that~$D$ can
maintain the invariant. Suppose that~$S$ makes a move from $a$ to some
$a'$ (the case where $S$ moves in~$\mathcal{C}$ is entirely
symmetric). Let $b'$ be $D$'s reply to~$a'$ in the first game. Then
$b'$ is a valid move for $S$ in the second game, since
$R(b,b') \ge R(a,a')-\epsilon > \epsilon+\delta-\epsilon = \delta$.
So~$D$ has a reply $c'$ to $b'$ as a move by~$S$ in the second game.
and $c'$ is also a valid move for $D$ in the $(\epsilon+\delta)$-game,
since $R(c,c') \ge R(b,b')-\delta \ge R(a,a')-(\epsilon+\delta)$. Of
course, the new configuration $(a',c')$ then satisfies the invariant,
as witnessed by~$b'$. \qed

\subsubsection*{Proof of Lemma~\ref{lem:pseudometrics}}
Let $\CA$ be a model. For $a\in A$, $D$ wins the $0$-bisimulation game
on $a$, $a$ by copying every move of $S$. Thus $d^G_n(a,a) = 0$.  The
symmetry of $d^G$ follows from that of
Definition~\ref{def:bisimulation-game}.  The triangle inequality
follows from Lemma~\ref{lem:game-transitive}. \qed

\subsubsection*{Proof of Lemma~\ref{lem:metrics}}
A winning strategy for $D$ in the $\epsilon$-bisimulation game wins
also the depth-$n$ $\epsilon$-bisimulation game, showing that
$d^G_n(a,b)\le d^G(a,b)$; the other inequality is shown in the same
way. \qed

\subsubsection*{Proof of Lemma~\ref{lem:mor-zero-dist}}
Let $\epsilon>0$; we show that $D$ wins the $\epsilon$-bisimulation
game on $a$ and $f(a)$. The winning strategy is given by maintaining
the invariant that configurations are of the form $(a',f(a'))$ with
$a'\in A$. This invariant holds initially, and guarantees that~$D$
wins because $f$ preserves truth values of propositional atoms. It
remains to show that~$D$ can maintain the invariant. So suppose that
from a configuration $(a',f(a'))$, $S$ moves in $\CA$ along an edge
$R^\CA(a',a'')>\epsilon$. By the definition of bounded morphisms,
$R^\CB(f(a'),f(a''))\ge R^\CA(a',a'')$, so~$D$ can answer with
$f(a'')$, maintaining the invariant. The remaining case is that $S$
moves within~$\CB$ to some state $b$ such that
$R^\CB(f(a'),b)>\epsilon$. Since
$R^\CB(f(a'),b)=\bigvee_{f(a'')=b}R^\CA(a',a'')$, there is $a''\in A$
such that $f(a'')=b$ and $R^\CA(a',a'')\ge R^\CB(f(a'),b)-\epsilon$;
then $D$ can move to $a''$, maintaining the invariant. \qed

\subsubsection*{Full Proof of Lemma~\ref{lem:bisim-inv-modal}}

Suppose that $D$ wins the $\epsilon$-bisimulation-game for $(a,b)$.
We show that $|\phi(a)-\phi(b)|\le\epsilon$ for all $\phi\in\modf{n}$
by induction on $\phi$. The case for $c\in\Rat$ is trivial and the
case for propositional atoms $p$ follows from the winning condition.
  
The inductive cases $\phi\ominus c$, $\neg\phi$, and $\phi\land\psi$
are as follows:
\begin{align*}
  |(\phi\ominus c)(a) - (\phi\ominus c)(b)|
  & = |(\phi(a)-c)\land 0 - (\phi(b)-c)\land 0| \\
  & \le |\phi(a)-\phi(b)| \le \epsilon \\
  |(\neg\phi)(a)-(\neg\phi)(b)|
  & = |(1-\phi(a))-(1-\phi(b))|\\
  & = |\phi(b)-\phi(a)| \le \epsilon \\
  |(\phi\land\psi)(a) - (\phi\land\psi)(b)|
  & = |\phi(a)\land\psi(a) - \phi(b)\land\psi(b)| \\
  & \le |\phi(a)-\phi(b)|\lor|\psi(a)-\psi(b)|\\& \le \epsilon,
\end{align*}
in each case using the inductive hypothesis in the last step.
  
Finally, we treat the case for the modality~$\Diamond$: By symmetry,
it suffices to show that
$(\Diamond\phi)(b) \ge (\Diamond\phi)(a)-\epsilon$. If
$(\Diamond\phi)(a)\le\epsilon$, then this follows immediately, so
assume $(\Diamond\phi)(a)>\epsilon$.  Let $\delta>0$; then there
is~$a'$ such that $R(a,a')>\epsilon$ and
$(\Diamond\phi)(a)-(R(a,a')\land\phi(a'))<\delta$.  Let $b'$ be $D$'s
winning answer to $S$'s move $a'$.  Then by induction,
$|\phi(a')-\phi(b')|\le\epsilon$, and moreover
$R(b,b')\ge R(a,a')-\epsilon$ by the rules of the game. Thus,
\begin{multline*}
  (\Diamond\phi)(b)\ge R(b,b')\land\phi(b')\ge
  (R(a,a')-\epsilon)\land(\phi(a')-\epsilon)\\
  =(R(a,a')\land\phi(a'))-\epsilon >(\Diamond\phi)(a)-\epsilon-\delta.
\end{multline*}
Since this holds for every $\delta$, it follows that
$(\Diamond\phi))(b)\ge(\Diamond\phi))(a)-\epsilon$. \qed

\subsubsection*{Proof Details for Theorem~\ref{thm:modal-approx}}

We elaborate details for the base case $n=0$.  For
Item~\ref{item:metrics-equal}, we show that all distances are~$0$ for
all pairs $(a,b)\in A\times A$. We have $d^G_0(a,b) = 0$ by
Remark~\ref{rem:zero-game}. To see that $d^L_0(a,b) = 0$, just note
that the only modal formulas of rank $0$ are the constants $c\in\Rat$
and Boolean combinations thereof. Finally, $d^K_0(a,b) = 0$ by
definition.

Item~\ref{item:tot-bounded} then follows trivially from the fact that
$d_0$ vanishes.

For Item~\ref{item:modal-approx}, since $d_0$ is the zero
pseudometric, the space $\nonexp{(A,d_0)}{([0,1],d_e)}$ is just the
space of constant functions. The claim then follows, since the modal
formulas correspond to the rational-valued constant functions, and
$[0,1]\cap\Rat$ is a dense subset of $[0,1]$. \qed

\subsubsection*{Full Proof of Lemma~\ref{lem:metrics-equal-LK}}

Let $a,b\in A$, put $F:=\nonexp{(A,d_{n-1})}{([0,1],d_e)}$, and define
the map
\[ H\colon (F,d_\infty) \to ([0,1],d_e),
   f \mapsto |(\Diamond f)(a) - (\Diamond f)(b)|. \]
This map is continuous because of Lemma~\ref{lem:diamond-nonexp}, and
using that function evaluation, subtraction, and taking the absolute
value are continuous operations.

By the induction hypothesis, $\modf{n-1}$ is dense in $F$, so
$H[\modf{n-1}]$ is also dense in $H[F]$, and $\bigvee H[\modf{n-1}] =
\bigvee H[F]$. Now:
\begin{align*}
  d^K_n(a,b) = & \bigvee_{p \in \textrm{At}}|p(a)-p(b)| \lor
  \bigvee_{\mathclap{f: \nonexp{(A,d_{n-1})}{([0,1],d_e)}}}
  | (\Diamond f)(a) - (\Diamond f)(b) | \\
  = & \bigvee_{p \in \textrm{At}}|p(a)-p(b)| \lor
  \bigvee_{\mathclap{\rk\phi\le n-1}}
  | (\Diamond \phi)(a) - (\Diamond \phi)(b) | \\
  = & \bigvee_{\mathclap{\rk\phi\le n}}|\phi(a) - \phi(b)|
  = d^L_n(a,b).
\end{align*}
In the second to last step, we have used the fact that $\modf{n}$ is
the set of Boolean combinations of formulas $p\in\At$ and
$\Diamond\phi$ with $\phi\in\modf{n-1}$. So the ``$\le$'' part of this
step is clearly true, and the ``$\ge$'' part also holds by a simple
induction over the Boolean combinations, using that these are
non-expansive, as shown by the same calculations as in the proof of
Lemma~\ref{lem:bisim-inv-modal}. \qed

\subsubsection*{Proof Details for Lemma~\ref{lem:metrics-equal-GK}}

  Let $a,b \in A$ and $d^L_n(a,b) \le \epsilon$. We need to show
  $d^G_n(a,b) \le \epsilon$, so it suffices to show that $D$ wins the
  depth-$n$ $(\epsilon+\delta)$-bisimulation game for every $\delta > 0$.
  The winning condition for the configuration $(a,b)$ is satisfied by
  assumption, so now suppose $S$ makes the first move from $a$ to $a'$.
  We now consider the function
  \begin{equation*}
    f\colon \nonexp{(A,d_{n-1})}{([0,1],d_e)},
    f(b') = R(a,a') \ominus d_{n-1}(a',b')
  \end{equation*}
  
  $f$ is non-expansive because it is composed from non-expansive
  functions: the map $x \mapsto c\ominus x$ is non-expansive for every
  $c\in\mathbb{R}$, and the map $b' \mapsto d_{n-1}(a',b')$ is
  non-expansive by the triangle inequality.
  
  Now,
  \begin{equation*}
    |(\Diamond f)(a) - (\Diamond f)(b)| \le d^K_n(a,b) = d^L_n(a,b)
    \le \epsilon
  \end{equation*}
  by Lemma~\ref{lem:metrics-equal-LK}, so, as $R(a,a') > \epsilon$ by
  the rules of the game,
  \begin{equation*}
    (\Diamond f)(b) \ge (\Diamond f)(a) - \epsilon
    \ge R(a,a') \land f(a') - \epsilon = R(a,a') - \epsilon > 0.
  \end{equation*}
  This means there exists some $b' \in A$ such that $f(b') > 0$ and
  \begin{align*}
    R(a,a')-\epsilon \le (\Diamond f)(b) \le R(b,b') \land f(b') +
    \tfrac{\delta}{2}.
  \end{align*}
  Rearranging this, we get
  \begin{align*}
  \epsilon+\tfrac{\delta}{2}
  & \ge R(a,a') - (R(b,b')\land f(b')) \\
  & = (R(a,a')-R(b,b')) \lor (R(a,a')-f(b')).
  \end{align*}
  So first, $R(b,b')\ge R(a,a')-(\epsilon+\frac{\delta}{2})$, which
  means that $b'$ is a legal reply for $D$. And second, since
  $f(b')>0$, 
  \begin{align*}
    \epsilon+\tfrac{\delta}{2} & \ge R(a,a')-f(b') \\
    & = R(a,a')-(R(a,a')-d_{n-1}(a',b')) = d_{n-1}(a',b').
  \end{align*}
  So the configuration reached after the first round of the game is
  $(a',b')$, and $D$ has a winning strategy for the
  $(\epsilon+\frac{\delta}{2}+\gamma)$-game for every $\gamma > 0$, in
  particular $D$ wins the $(\epsilon+\delta)$-game.
  An analogous argument can be used if $S$ makes a move from $b$ to
  some $b'$ instead. \qed

\subsubsection*{Proof of Lemma~\ref{lem:arzela-ascoli}}

  Put $F := \nonexp{(X,d_1)}{(Y,d_2)}$.  Let $\epsilon > 0$. We need
  to find a finite cover of $\nonexp{(X,d_1)}{(Y,d_2)}$ by sets of
  diameter at most~$\epsilon$.

Since $(X,d_1)$ and $(Y,d_2)$ are totally bounded, there exist finite
$\frac{\epsilon}{4}$-covers
$x_1,\dots,x_n$ of $X$ and $y_1,\dots,y_k$ of $Y$.

Now consider the set $\Phi$ of functions $\rho\colon \{1,\dots,n\} \to
\{1,\dots,k\}$, and for every $\rho \in \Phi$ let
\begin{equation*}
  F_\rho = \{ f \in F \mid f(x_i) \in
  \ball{d_2}{\frac{\epsilon}{4}}{y_{\rho(i)}} \text{ for all } 1 \le i \le n
  \}.
\end{equation*}
Then clearly $F = \bigcup_{\rho \in \Phi} F_\rho$, so it remains to
show that each $F_\rho$ has diameter at most $\epsilon$.

So let $f,g \in F_\rho$, and consider some $x \in X$. There exists
some $i$ such that $x \in \ball{d_1}{\frac{\epsilon}{4}}{x_i}$. Now,
by non-expansivity of $f$ and $g$, and the definition of $F_\rho$,
\begin{align*}
  d_2(f(x),g(x)) & \le
  d_2(f(x),f(x_i)) + d_2(f(x_i),y_{\rho(i)}) \\
  & + d_2(y_{\rho(i)},g(x_i)) + d_2(g(x_i),g(x)) \\
  & \le d_1(x,x_i) + \tfrac{\epsilon}{4} + \tfrac{\epsilon}{4} +
  d_1(x_i,x) \\
  & \le \tfrac{\epsilon}{4} + \tfrac{\epsilon}{4} +
  \tfrac{\epsilon}{4} + \tfrac{\epsilon}{4} = \epsilon.
\end{align*}
As $f$, $g$, and $x$ were chosen arbitrarily, the diameter of $F_\rho$ is
at most $\epsilon$. \qed

\subsubsection*{Proof Details for Lemma~\ref{lem:tot-bounded}}

We show that the map
\begin{align*}
I\colon A &\to [0,1]^{k+m}\\
  a& \mapsto
(p_1(a),\dots,p_k(a),(\Diamond\phi_1)(a),\dots,(\Diamond\phi_m)(a))
\end{align*}
is indeed an $\frac{\epsilon}{3}$-isometry.
Let $a,b\in A$. We need to show that
\begin{equation*}
|d_n(a,b) - \supnorm{I(a) - I(b)}| \le \tfrac{\epsilon}{3}.
\end{equation*}
Let $f\in F$ and choose $\phi_i$ such that $\supnorm{f -
\phi_i} \le \frac{\epsilon}{6}$. Then also $\supnorm{\Diamond f
-\Diamond \phi_i} \le \frac{\epsilon}{6}$, by
Lemma~\ref{lem:diamond-nonexp}.
By the triangle inequality, it follows that
\begin{equation*}
  \big| |(\Diamond f)(a) - (\Diamond f)(b)| -
    |(\Diamond\phi_i)(a) - (\Diamond\phi_i)(b)| \big|
    \le \tfrac{\epsilon}{3},
\end{equation*}
so, taking the supremum over all $f\in F$:
\begin{equation*}
  \Big| \bigvee_{f\in F} |(\Diamond f)(a) - (\Diamond f)(b)| -
    \bigvee_{i\le m} |(\Diamond\phi_i)(a) - (\Diamond\phi_i)(b)| \Big|
    \le \tfrac{\epsilon}{3}.
\end{equation*}
Recall from Assumption~\ref{ass:fin-atoms} that
$\At = \{p_1,\dots,p_k\}$ is finite; then
\begin{equation*}
\Big| d^K_n(a,b) - \bigvee_{i\le k} |p_i(a)-p_i(b)| \lor
\bigvee_{i\le m} |(\Diamond\phi_i)(a) - (\Diamond\phi_i)(b)| \Big|
\le \tfrac{\epsilon}{3}, 
\end{equation*}
which is what we needed to show.

It remains to give a finite $\epsilon$-cover of $(A,d_n)$.
As $[0,1]^{k+m}$ is compact under the supremum metric, it has a finite
$\frac{\epsilon}{3}$-cover $v_1,\dots,v_p$.
Then the pre-image
of each ball $B_\frac{\epsilon}{3}(v_i)$ has diameter at most
$\epsilon$: for any $a,b \in I^{-1}[B_\frac{\epsilon}{3}(v_i)]$,
\begin{align*}
d_n(a,b) & \le \supnorm{I(a) - I(b)} + \tfrac{\epsilon}{3} \\
& \le \supnorm{I(a) - v_i} + \supnorm{v_i - I(b)} + \tfrac{\epsilon}{3}
\le \epsilon.
\end{align*}
So a finite $\epsilon$-cover of $(A,d_n)$ arises by taking one element
from each (non-empty) $I^{-1}[B_\frac{\epsilon}{3}(v_i)]$. \qed

\subsubsection*{Proof of Lemma~\ref{lem:stone-weierstrass}}
Let $f\in F$ and $\epsilon > 0$. We need to find some $f_\epsilon \in
L$ such that $\supnorm{f-f_\epsilon} \le \epsilon$. 

By total boundedness, there exists an $\frac{\epsilon}{4}$-cover
$x_1,\dots,x_n$ of $(X,d)$.  By assumption, for every
$i,j \in \{1,\dots,n\}$ there exists some $f_{ij} \in L$ such that
$|f(x_i)-f_{ij}(x_i)| \le \frac{\epsilon}{2}$ and
$|f(x_j)-f_{ij}(x_j)| \le \frac{\epsilon}{2}$.  Now define
$f_\epsilon = \bigvee_{i\le n} \bigwedge_{j\le n} f_{ij} \in L$.
Then, for any $x \in X$ there exists some $k$ such that
$d(x_k,x) \le \frac{\epsilon}{4}$ and thus:
\begin{align*}
  f_\epsilon(x) & = \bigvee_{i\le n} \bigwedge_{j\le n} f_{ij}(x)
                \le \bigvee_{i\le n} f_{ik}(x)
                \le \bigvee_{i\le n} f_{ik}(x_k) + \tfrac{\epsilon}{4} \\
              & \le \bigvee_{i\le n} f(x_k) + \tfrac{3\epsilon}{4}
                = f(x_k) + \tfrac{3\epsilon}{4}
                \le f(x) + \epsilon,
\end{align*}
and, symmetrically:
\begin{align*}
  f_\epsilon(x) & = \bigvee_{i\le n} \bigwedge_{j\le n} f_{ij}(x)
                \ge \bigwedge_{j\le n} f_{kj}(x)
                \ge \bigwedge_{j\le n} f_{kj}(x_k) - \tfrac{\epsilon}{4} \\
              & \ge \bigwedge_{j\le n} f(x_k) - \tfrac{3\epsilon}{4}
                = f(x_k) - \tfrac{3\epsilon}{4}
                \ge f(x) - \epsilon,
\end{align*}
where we have used non-expansivity of $f$ and the $f_{ij}$ as well as
the originally assumed property of the $f_{ij}$. \qed

\subsubsection*{Proof Details for Lemma~\ref{lem:modal-approx}}

Lemma~\ref{lem:stone-weierstrass} can be applied because $(A,d_n)$ is
totally bounded by Lemma~\ref{lem:tot-bounded}, and because the set
$\modf{n}$ is clearly closed under $\land$ and $\lor$.

Given a function $f\colon\nonexp{(A,d_n)}{([0,1],d_e)}$, $a,b \in A$
and $\epsilon > 0$, we need to find $\phi\in\modf{n}$ such that
$|f(a)-\phi(a)| \le \epsilon$ and $|f(b)-\phi(b)| \le \epsilon$.

W.l.o.g. $f(a) \ge f(b)$ (otherwise we can pass to $1-f$ and negate
the resulting formula).
Now put $\Delta = f(a)-f(b)$. Then $\Delta \le d_n(a,b)$
by non-expansivity of $f$. Since $d_n = d^L_n$, there exists $\psi
\in \modf{n}$ such that $\Delta - \tfrac{\epsilon}{2} \le \psi(a) - \psi(b)$.
Let $u,v,w \in \Rat \cap [0,1]$ such that
\begin{gather*}
\ineqm{u}{\psi(b)} \\
\ineqm{v}{\Delta} \\
\ineqp{w}{f(b)}.
\end{gather*}
Put $\phi = \neg(\neg((\psi\ominus u) \land v) \ominus w)$.
Then $\phi$ approximates $f$ at $a$ and $b$:
\begin{gather*}
\ineqb{\phi(a)}{f(a)} \\
f(b)\le\phi(b)\le f(b)+\epsilon.
\end{gather*}
The detailed calculations for the above inequalities follow.
Evaluating subformulas at $a$ gives:
\begin{gather*}
  \ineqp{(\psi\ominus u)(a)}{\psi(a)-\psi(b)} \\
  \ineqm{((\psi\ominus u)\land v)(a)}{\Delta} \\
  \ineqp{(\neg((\psi\ominus u)\land v))(a)}{1-\Delta} \\
  \ineqb{(\neg((\psi\ominus u)\land v)\ominus w)(a)}{1-f(a)} \\
  \ineqb{\phi(a)}{f(a)}.
\end{gather*}
Evaluating subformulas at $b$ gives:
\begin{gather*}
  \ineqp{(\psi\ominus u)(b)}{0} \\
  \ineqp{((\psi\ominus u)\land v)(b)}{0} \\
  \ineqm{(\neg((\psi\ominus u)\land v))(b)}{1} \\
  1-f(b)-\epsilon\le
  (\neg((\psi\ominus u)\land v)\ominus w)(b)\le
  1-f(b) \\
  f(b)\le\phi(b)\le f(b)+\epsilon. 
\end{gather*}
\qed

\subsubsection*{Full Proof of Lemma~\ref{lem:bisim-inv-projection}}
We show by induction on $n$ that~$D$ wins the depth-$n$
$\epsilon$-bisimulation game for every~$\epsilon>0$. The base case is
trivial since the depth-$0$ game is an immediate win; we proceed with
the inductive step from $n$ to $n+1$. We need to show that $D$ wins
the depth-$(n+1)$ $\epsilon$-bisimulation game for $\CA,a$ and
$\CF,\pi_{n+1}(a)$.  By the explicit definition~\eqref{eq:final-cone}
of $\pi_{n+1}$, it is immediate that the winning condition holds in
the initial configuration.
  
If $S$ makes the first move from $a$ to some $a'\in A$, then~$D$ can
reply with $\pi_n(a')$, since by~\eqref{eq:final-cone},
\begin{align*}\textstyle
  R^{\CF}(\pi_{n+1}(a),\pi_n(a'))
  = \bigvee_{\pi_n(a'')=\pi_n(a')}R^\CA(a,a'') \ge R^\CA(a,a').
\end{align*}
By induction, $d^G_n(a',\pi_n(a')) = 0$, so~$D$ wins.

If instead, $S$ makes the first move from $\pi_{n+1}(a)$ to some
$y\in F$, then $R^{\CF}(\pi_{n+1}(a),y) > 0$ by the rules of the game,
so $y\in F_n$ by construction of $R^\CF$. By~\eqref{eq:final-cone},
\begin{equation*}
  R^{\CF}(\pi_{n+1}(a),y) = 
  \textstyle{\bigvee_{\pi_n(a')=y}R^\CA(a,a')}).
\end{equation*}
Thus, $D$ can pick $a'\in A$ with $\pi_n(a')=y$ such that
\[ R^\CA(a,a') \ge R^{\CF}(\pi_{n+1}(a),y) -\epsilon. \]
By induction, $d^G_n(a',y) = 0$, so $D$ wins. \qed

\subsubsection*{Proof of Lemma~\ref{lem:nbhood-bisim}}

$D$ wins the game by copying every move $S$ makes. By the definition
of $R^\rnbhood{\CA}{k}{a_0}$ and the $p^\rnbhood{\CA}{k}{a_0}$ it is
clear that such a strategy is winning as long as the game never leaves
the neighbourhood $\nbhood{k}{a_0}$. By the rules of the game, $S$ can
only ever move along positive edges of the model, so if the
configuration after round $i$ is $(a_i,a_i)$, it must hold that
$D(a_0,a_i)\le i$ and therefore $a_i\in\nbhood{k}{a_0}$. \qed

\subsubsection*{Proof of Lemma~\ref{lem:ef-inv-fol}}

  Induction on $\phi$.
  The cases for equality, propositional atoms and the relation symbol
  $R$ follow from the winning condition. The Boolean cases are proved
  just as in Lemma~\ref{lem:bisim-inv-modal}. The remaining case is
  that of existential quantification:

  Let $(\bar a,\bar b)$ be the current configuration.
  Now let $\delta>0$, let $a$ be such that
  \begin{equation*}
    (\exists x.\,\phi)(\bar a)- \phi(\bar a a)< \delta,
  \end{equation*}
  and let $b$ be $D$'s winning answer to $S$'s move $a$. Then by
  induction, $|\phi(\bar a a)-\phi(\bar b b)|\le \epsilon$. Thus,
  \begin{equation*}
    (\exists x.\,\phi)(\bar b)\ge\phi(\bar b b)\ge\phi(\bar a a)-\epsilon
    >(\exists x.\,\phi)(\bar a)-\epsilon-\delta.
  \end{equation*}
  Since $\delta>0$ was arbitrary, it follows that 
  \begin{equation*}
    (\exists x.\,\phi)(\bar b)\ge(\exists x.\,\phi)(\bar a)-\epsilon.
  \end{equation*}
  We show symmetrically that
  $(\exists x.\,\phi)(\bar a)\ge(\exists x.\,\phi)(\bar b)-\epsilon$,
  that is,
  $|(\exists x.\,\phi)(\bar a)-(\exists x.\,\phi)(\bar b)|\le\epsilon$
  as required. \qed

\subsubsection*{Proof Details for Lemma~\ref{lem:bisim-inv-local}}

Recall that $D$ needs to maintain the following invariant:

\begin{quote}
  If $(\bar b,\bar c) = ((b_0,\dots,b_i),(c_0,\dots,c_i))$ is the
  current configuration then there is an isomorphism between
  $\rnbhood{\CB}{k_i}{\bar b}$ and $\rnbhood{\CC}{k_i}{\bar c}$
  mapping each $b_j$ to $c_j$.
\end{quote}
\noindent This invariant clearly holds at the beginning of the game: the
initial configuration is $(a_0,a_0)$, and $k_0 = k$, so the two models
in the invariant are both isomorphic to $\rnbhood{\CA}{k}{a_0}$ and
the isomorphism between them maps $a_0$ to itself.

The invariant also implies the winning condition for $D$, i.e.~that
the current configuration is a partial isomorphism up to $0$. This is
because the isomorphism from the invariant maps each $b_j$ to the
corresponding $c_j$.

It remains to show that $D$ has a way to maintain the invariant.
Suppose that $i<n$ and the current configuration is as in the
invariant.

First, suppose that $S$ picks $b\in\nbhood{2k_{i+1}}{\bar b}$. Then
$D$ picks a reply $c$ according to the isomorphism. By the triangle
inequality for Gaifman distance,
$\nbhood{k_{i+1}}{b}\subseteq\nbhood{k_i}{\bar b}$ (since $2k_{i+1} +
k_{i+1} = 3k_{i+1} = k_i$), and thus also
$\nbhood{k_{i+1}}{c}\subseteq\nbhood{k_i}{\bar c}$ by isomorphism.
This implies that the domain $\nbhood{k_{i+1}}{\bar bb}$ and range
$\nbhood{k_{i+1}}{\bar cc}$ of the presumptive new isomorphism are
contained in the domain and range of the old one. So the new
isomorphism can be taken to be the restriction of the old isomorphism
to the new domain and range. The case where $S$ picks a new state
$c\in\nbhood{2k_{i+1}}{\bar c}$ is entirely symmetric.

Otherwise, suppose $S$ picks some $b$ in $\CB$ with
$b\notin\nbhood{2k_{i+1}}{\bar b}$. Then, by the triangle inequality
for Gaifman distance,
$\nbhood{k_{i+1}}{\bar b}\cap\nbhood{k_{i+1}}{b}=\emptyset$. In this
case,~$D$ picks as his reply~$c$ the copy of~$b$ in a fresh copy of
either $\CA$ or $\rnbhood{\CA}{k}{a_0}$ (i.e.~one that has not been
played to in the previous rounds). Such a fresh copy is always
available, because at most one of them gets visited in each
round. Then the radius-$k_{i+1}$ neighbourhoods of $b$ and $c$ are
isomorphic because $b$ and $c$ are the same element in isomorphic
copies of either $\CA$ or $\rnbhood{\CA}{k}{a_0}$. The
radius-$k_{i+1}$ neighbourhoods of $\bar b$ and $\bar c$ are also
isomorphic, by restriction of the old isomorphism. We thus have two
isomorphisms with disjoint domains and ranges, which we combine to
form the requested new isomorphism. Again, the case where $S$ plays in
$\CC$ instead is symmetric. \qed

\subsubsection*{Proof of Lemma~\ref{lem:unravelling-bisim}}

A winning strategy for $D$ is given by $\pi\colon A^+\to A$,
i.e.~projection to the last element. More precisely, $D$ wins by
maintaining the invariant that the current configuration is of the
form $(a,\bar a)$ with $\pi(\bar a) = a$. By definition of
$p^{\CA^\ast}$ the invariant implies the winning condition. If $S$
moves from $a$ to some $a'$, then $D$ can reply with a move from $\bar
a$ to $\bar a a'$, which is legal by definition of $R^{\CA^\ast}$. The
situation is symmetric if $S$ makes a move in $\CA^\ast$ instead. \qed

\subsubsection*{Full Proof of Lemma~\ref{lem:local-k-bisim-inv}}
Let $d_{k+1}^G(a,b)<\epsilon$; we show that
$|\phi_\CA(a)-\phi_\CB(b)|\le\epsilon$, which proves the claim. By
assumption, $D$ wins the $\epsilon$-bisimulation-game for $\CA,a$ and
$\CB,b$. By Lemmas~\ref{lem:game-transitive},~\ref{lem:nbhood-bisim}
and~\ref{lem:unravelling-bisim}, $D$ also wins the depth-$(k+1)$
$\epsilon$-bisimulation game for $\rnbhood{(\CA^\ast)}{k}{a},a$ and
$\rnbhood{(\CB^\ast)}{k}{b},b$.
  
The models $\rnbhood{(\CA^\ast)}{k}{a}$ and
$\rnbhood{(\CB^\ast)}{k}{b}$ both have the shape of trees of depth
$k$, so for every $0\le i\le k$, before the start of round $i+1$ of
the above game, the two states on either side of the current
configuration are nodes at distance $i$ from the root of their tree
(i.e.~$a$ or $b$). In particular, if round $k+1$ needs to be played,
then~$S$ has no legal move, because the current configuration consists
of two leaf nodes.
  
Using this observation, we conclude that $D$'s winning strategy for
the depth-$(k+1)$ game is in fact also a winning strategy for the
unbounded $\epsilon$-bisimulation game, so
$|\phi_\rnbhood{(\CA^\ast)}{k}{a}(a)-\phi_\rnbhood{(\CB^\ast)}{k}{b}(b)|\le\epsilon$,
by bisimulation invariance of $\phi$.
  
By locality and bisimulation invariance of $\phi$, and again
Lemma~\ref{lem:unravelling-bisim}, we have
$\phi_\rnbhood{(\CA^\ast)}{k}{a}(a) = \phi_{(\CA^\ast)}(a) =
\phi_\CA(a)$
as well as
$\phi_\rnbhood{(\CB^\ast)}{k}{b}(b) = \phi_{(\CB^\ast)}(b) =
\phi_\CB(b)$.
Thus $|\phi_\CA(a)-\phi_\CB(b)|\le\epsilon$, as claimed. \qed

\subsubsection*{Full Proof of Theorem~\ref{thm:benthem-rosen}}
By Lemmas~\ref{lem:bisim-inv-local} and~\ref{lem:local-k-bisim-inv},
$\phi$ is depth-$k$ bisimulation-invariant for $k = 3^n + 1$.  By
Theorem~\ref{thm:modal-approx}, $\phi$ can be modally approximated on
the model $\CF$ constructed from the final chain in
Section~\ref{sec:final-chain}, i.e.\ for every $\epsilon>0$ there
exists a modal formula $\phi_\epsilon$ of rank at most~$k$ such that
for every $x\in F$, $|\phi(x)-\phi_\epsilon(x)|\le\epsilon$. Now let
$\CA$ be a fuzzy relational model and $a\in A$.  By
Lemma~\ref{lem:bisim-inv-projection}, $\phi(a) = \phi(\pi_k(a))$ and
$\phi_\epsilon(a) = \phi_\epsilon(\pi_k(a))$ (where~$\pi_k$ is the
projection into the final chain), so we obtain
$|\phi(a)-\phi_\epsilon(a)|\le\epsilon$, as required. \qed

\subsubsection*{Details for Remark~\ref{rem:partial-unravelling}}

In the version of Lemma~\ref{lem:unravelling-bisim} where $\CA^\ast$
is the partial unravelling instead, $D$ wins with a similar,
but slightly more complicated invariant: the current configuration is
either of the form $(a,\bar a)$ with $\pi(\bar a) = a$ or it is a pair
of two equal states, the second being from one of the disjoint copies
of $\CA$. $D$ can maintain this invariant for the first $k+1$ rounds
just as before, and after that can copy $S$'s moves indefinitely
because the game is now played between identical models.


\end{document}